\definecolor{DarkGreen}{rgb}{0.1,0.5,0.1}
\definecolor{DarkRed}{rgb}{0.5,0.1,0.1}
\definecolor{DarkBlue}{rgb}{0.1,0.1,0.5}
\definecolor{Gray}{rgb}{0.2,0.2,0.2}
\lstdefinestyle{mystyle}{
    commentstyle=\color{DarkBlue},
    keywordstyle=\color{DarkRed},
    numberstyle=\tiny\color{Gray},
    stringstyle=\color{DarkGreen},
    basicstyle=\footnotesize,
    breakatwhitespace=false,
    breaklines=true,
    captionpos=b,
    keepspaces=true,
    numbers=left,
    numbersep=5pt,
    showspaces=false,
    showstringspaces=false,
    showtabs=false,
    tabsize=2
}
\def\draft{0}
\def\submit{0}
\newcommand{\forsubmit}[1]{#1}
\newcommand{\forreals}[1]{}
\newcommand{\forreals}[1]{#1}
\newcommand{\forsubmit}[1]{}
\newtheorem{theorem}{Theorem}[section]
\newtheorem{remark}[theorem]{Remark}
\newtheorem{lemma}[theorem]{Lemma}
\newtheorem{corollary}[theorem]{Corollary}
\newtheorem{proposition}[theorem]{Proposition}
\newtheorem{fact}[theorem]{Fact}
\theoremstyle{definition}
\newtheorem{definition}[theorem]{Definition}
\newtheorem{example}[theorem]{Example}
\newtheoremstyle{example_contd}
{\topsep} {\topsep}%
{}
{}
{\bfseries}
{.}
{1em}
{\thmname{#1} \thmnumber{ #2}\thmnote{#3} (continued)}
\theoremstyle{example_contd}
\newcommand{\chapterref}[1]{\hyperref[ch:#1]{Chapter~\ref{ch:#1}}}
\newcommand{\claimref}[1]{\hyperref[claim:#1]{Claim~\ref{claim:#1}}}
\newcommand{\corollaryref}[1]{\hyperref[cor:#1]{Corollary~\ref{cor:#1}}}
\newcommand{\definitionref}[1]{\hyperref[def:#1]{Definition~\ref{def:#1}}}
\newcommand{\equationref}[1]{\hyperref[eq:#1]{Equation~\ref{eq:#1}}}
\newcommand{\factref}[1]{\hyperref[fact:#1]{Fact~\ref{fact:#1}}}
\newcommand{\figureref}[1]{\hyperref[fig:#1]{Figure~\ref{fig:#1}}}
\newcommand{\tableref}[1]{\hyperref[tab:#1]{Table~\ref{tab:#1}}}
\newcommand{\itemref}[1]{\hyperref[item:#1]{Item~(\ref{item:#1})}}
\newcommand{\lemmaref}[1]{\hyperref[lem:#1]{Lemma~\ref{lem:#1}}}
\newcommand{\propref}[1]{\hyperref[prop:#1]{Proposition~\ref{prop:#1}}}
\newcommand{\propositionref}[1]{\hyperref[prop:#1]{Proposition~\ref{prop:#1}}}
\newcommand{\remarkref}[1]{\hyperref[rem:#1]{Remark~\ref{rem:#1}}}
\newcommand{\sectionref}[1]{\hyperref[sec:#1]{Section~\ref{sec:#1}}}
\newcommand{\theoremref}[1]{\hyperref[thm:#1]{Theorem~\ref{thm:#1}}}
\newcommand{\Esymb}{\mathbb{E}}
\newcommand{\Psymb}{\mathbb{P}}
\DeclareMathOperator*{\E}{\Esymb}
\DeclareMathOperator*{\ProbOp}{\Psymb r}
\renewcommand{\Pr}{\ProbOp}
\newcommand{\Prob}[1]{\Pr\left[ #1 \right]}
\newcommand{\one}{\mathbf{1}}
\newcommand{\A}{{\cal A}}
\newcommand{\F}{{\cal F}}
\newcommand{\G}{{\cal G}}
\newcommand{\X}{{\cal X}}
\newcommand{\cS}{{\cal S}}
\newcommand{\defeq}{\stackrel{\small \mathrm{def}}{=}}
\renewcommand{\leq}{\leqslant}
\renewcommand{\geq}{\geqslant}
\newcommand{\R}{\mathbb{R}}
\newcommand{\I}{\mathbb{I}}
\newcommand{\N}{\mathbb N}
\newcommand{\ignore}[1]{}
\renewcommand{\epsilon}{\varepsilon}
\newcommand{\eps}{\epsilon}
\newcommand{\remove}[1]{}
\newcommand{\cont}{\mathrm{CONT}}
\newcommand{\halt}{\mathrm{HALT}}
\newcommand{\Trestart}{T_{\mathrm{restart}}}
\newcommand{\Afilt}{\A^{\text{filt}}}
\newcommand{\loss}{\mathrm{Loss}}
\newcommand{\lossfilt}{\mathrm{Loss}^{\text{filt}}}
\newcommand{\bnorm}{B_{\text{norm}}}
\title{Individual Privacy Accounting via a R\'enyi Filter}
\author{Vitaly Feldman \\ Apple \and Tijana Zrnic\thanks{Work done while at Apple.}\\ University of California, Berkeley}
\date{}
\begin{document}

\maketitle

\begin{abstract}
We consider a sequential setting in which a single dataset of individuals is used to perform adaptively-chosen analyses, while ensuring that the differential privacy loss of each participant does not exceed a pre-specified privacy budget. The standard approach to this problem relies on bounding a worst-case estimate of the privacy loss over all individuals and all possible values of their data, for every single analysis. Yet, in many scenarios this approach is overly conservative, especially for ``typical'' data points which incur little privacy loss by participation in most of the analyses. In this work, we give a method for tighter privacy loss accounting based on the value of a personalized privacy loss estimate for each individual in each analysis. To implement the accounting method we design a {\em filter} for R\'enyi differential privacy. A filter is a tool that ensures that the privacy parameter of a composed sequence of algorithms with {\em adaptively-chosen} privacy parameters does not exceed a pre-specified budget. Our filter is simpler and tighter than the known filter for $(\epsilon,\delta)$-differential privacy by \citet{rogers2016privacy}. We apply our results to the analysis of noisy gradient descent and show that personalized accounting can be practical, easy to implement, and can only make the privacy-utility tradeoff tighter.
\end{abstract}

\section{Introduction}
Understanding how privacy of an individual degrades as the number of analyses using their data grows is of paramount importance in privacy-preserving data analysis. This allows individuals to participate in multiple disjoint statistical analyses, all the while knowing that their privacy cannot be compromised by aggregating the resulting reports. Furthermore, this feature is crucial for privacy-preserving algorithm design---instead of having to reason about the privacy properties of a complex algorithm, it allows reasoning about the privacy of the subroutines that make up the final algorithm.

For differential privacy \citep{DMNS06}, this accounting of privacy losses is typically done using composition theorems. Importantly, given that statistical analyses often rely on the outputs of previous analyses, and that algorithmic subroutines feed into one another, the composition theorems need to be \emph{adaptive}, namely, allow the choice of which algorithm to run next to depend on the outputs of all previous computations. For example, in gradient descent, the computation of the gradient depends on the value of the current iterate, which itself is the output of the previous steps of the algorithm.

Given the central role that adaptive composition theorems play in differentially private data analysis, they have been investigated in numerous works (e.g.~\cite{dwork2010boosting,kairouz2017composition,dwork2016concentrated,murtagh2016complexity,mironov2017renyi,bun2016concentrated,rogers2016privacy,dong2019gaussian,sommer2019privacy}). While they differ in some aspects, they also share one limitation. Namely, all of these theorems reason about the worst-case privacy loss for each constituent algorithm in the composition. Here, ``worst-case'' refers to the worst choice of individual in the dataset and worst choice of value for their data. This pessimistic accounting implies that every algorithm is summarized via a single privacy parameter, shared among all participants in the analysis. 

In most scenarios, however, different individuals have different effects on each of the algorithms, as measured by differential privacy.  More precisely, the output of an analysis may have little to no dependence on the presence of some individuals. For example, if we wish to report the average income in a neighborhood, removing an individual whose income is close to the average has virtually no impact on the final report after noise addition. Similarly, when training a machine learning model via gradient descent, the norm of the gradient given by a data point is often much smaller than the maximum norm (typically determined by a clipping operation). As a result, in many cases no single individual is likely to have the worst-case effect on all the steps of the analysis. This means that accounting based on existing composition theorems may be unnecessarily conservative.

In this work, we present a tighter analysis of privacy loss composition by computing the associated divergences at an individual level. In particular, to achieve a pre-specified privacy budget, we keep track of a personalized estimate of the privacy loss divergence for each individual in the analyzed dataset, and ensure that the respective estimate is maintained under the budget for all individuals throughout the composition. We do so by applying each analysis only to the points that are estimated to have sufficient leftover privacy budget.

The rest of the paper is organized as follows. In the remainder of this section, we give an overview of our main results and discuss related work. In the next section, we introduce the preliminaries necessary to state our results. In Section \ref{composition}, we prove our main adaptive composition theorem for R\'enyi differential privacy. We build off this result in Section \ref{filter}, where we develop a R\'enyi privacy filter---an object for budgeting privacy loss---and apply it to individual privacy accounting. In Section \ref{applications} we present an application of our theory to differentially private optimization, as well as some experimental results.

\subsection{Overview of main results}
It is feasible to measure the worst-case effect of a specific data point on a given analysis in terms of any of the divergences used to define differential privacy. One can simply  replace the supremum over all datasets in the standard definition of (removal) differential privacy with the supremum over datasets that include that specific data point (see Definition~\ref{def:stronger-ind-privacy}). Indeed, such a definition is given by \citet{ebadi2015differential} and a related definition is given by \citet{wang2019per}. However, a meaningful application of adaptive composition with such a definition immediately runs into the following technical challenge. Standard adaptive composition theorems require that the privacy parameter of each step be fixed in advance. For individual privacy parameters, this approach requires using the worst-case value of the individual privacy loss over all the possible analyses at a given step. Individual privacy parameters tend to be much more sensitive to the analysis being performed than worst-case privacy losses, and thus using the worst-case value over all analyses is likely to negate the benefits of using individual privacy losses in the first place.

Thus the main technical challenge in analyzing composition of individual privacy losses is that they are themselves random variables that depend on the outputs of all previous computations. More specifically, if we denote by $a_1,\dots,a_{t-1}$ the output of the first $t-1$ adaptively composed algorithms $\A_1,\dots,\A_{t-1}$, then the individual privacy loss of any point incurred by applying algorithm $\A_t$ is a function of $a_1,\dots,a_{t-1}$. 
Therefore, to tackle the problem of composing individual privacy losses we need to understand composition with \emph{adaptively-chosen} privacy parameters in general. We refer to this kind of composition as {\em fully adaptive}.

The setting of fully adaptive privacy composition is rather subtle and even defining privacy in terms of the adaptively-chosen privacy parameters requires some care. This setting was first studied by \citet{rogers2016privacy}, who introduced the notion of a \emph{privacy filter}. Informally, a privacy filter is a stopping time rule that halts a computation based on the adaptive sequence of privacy parameters and ensures that a pre-specified privacy budget is not exceeded. Rogers et al. define a filter for approximate differential privacy that asymptotically behaves like the advanced composition theorem \cite{dwork2010boosting}, but is substantially more involved and loses a constant factor. Moreover, several of the tighter analyses of Gaussian noise addition require composition to be done in R\'enyi differential privacy \citep{abadi2016deep,mironov2017renyi}. Converting them to $(\eps,\delta)$-differential privacy would incur an additional $\sqrt{\log(1/\delta)}$ factor in the final bound.

Our main result can be seen as a privacy filter for R\'enyi differential privacy (RDP) which justifies stopping the analyses based on the sum of privacy parameters so far \emph{even} under fully adaptive composition.

\begin{theorem}
\label{theorem:main}
Fix any $B\geq 0, \alpha\geq 1$. Suppose that $\A_t$ is $(\alpha,\rho_t)$-R\'enyi differentially private, where $\rho_t$ is an arbitrary function of $a_1,\dots,a_{t-1}$. If $\sum_{t=1}^k \rho_t \leq B$ holds almost surely, then the adaptive composition of $\A_1,\dots,\A_k$ is $(\alpha,B)$-R\'enyi differentially private.
\end{theorem}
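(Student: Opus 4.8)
The plan is to compare the output distributions of the adaptive composition on two adjacent datasets $D$ and $D'$ directly in terms of the Rényi divergence, and to control that divergence by conditioning step by step and using the almost-sure budget constraint $\sum_{t=1}^k \rho_t \leq B$. Write $a_{<t} = (a_1,\dots,a_{t-1})$ and let $p_t(\cdot \mid a_{<t})$ and $q_t(\cdot \mid a_{<t})$ denote the conditional densities of $\A_t$'s output on $D$ and $D'$ respectively; the key point enabled by the hypothesis is that for \emph{every} fixed prefix $a_{<t}$ (which determines $\rho_t$), the one-step RDP guarantee gives $D_\alpha\big(p_t(\cdot\mid a_{<t}) \,\|\, q_t(\cdot\mid a_{<t})\big) \leq \rho_t(a_{<t})$, i.e. $\int p_t^\alpha q_t^{1-\alpha} \leq e^{(\alpha-1)\rho_t(a_{<t})}$.

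First I would expand the order-$\alpha$ Rényi ``moment'' of the full transcript as a nested integral over $a_1,\dots,a_k$ of $\prod_{t=1}^k p_t(a_t\mid a_{<t})^\alpha\, q_t(a_t\mid a_{<t})^{1-\alpha}$. Then I would peel off the innermost integral (over $a_k$): holding $a_{<k}$ fixed, the inner integral is at most $e^{(\alpha-1)\rho_k(a_{<k})}$ by the one-step bound. The crucial structural fact is that $\rho_k$ depends only on $a_{<k}$, so this factor can be pulled out of the $a_k$-integral and absorbed into the remaining integrand. Iterating this peeling from $t=k$ down to $t=1$, each step contributes a factor $e^{(\alpha-1)\rho_t(a_{<t})}$ while the $p_t^\alpha q_t^{1-\alpha}$ terms for $t' < t$ still integrate to probability-like quantities. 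After all $k$ steps one is left with a bound of the form $\Ex{ e^{(\alpha-1)\sum_{t=1}^k \rho_t} }$, where the expectation is over the transcript drawn according to a suitable reference process; since $\sum_{t=1}^k \rho_t \leq B$ holds almost surely, this is at most $e^{(\alpha-1)B}$, which is exactly the statement that the composition is $(\alpha,B)$-RDP.

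The main obstacle is making the ``peeling'' argument rigorous when $\rho_t$ is data-dependent: one has to be careful about \emph{which} measure the accumulated exponential factors are integrated against, and about the direction of the divergence (the roles of $p$ and $q$, and of $D$ and $D'$, are not symmetric for $\alpha \neq 2$, so one should also handle the analogous bound with the roles swapped, or invoke the appropriate monotonicity). Concretely, after peeling step $t$, the quantity $e^{(\alpha-1)\rho_t(a_{<t})}$ is a function of $a_{<t}$ and must be re-integrated against the $p$'s (or against a mixture) for the smaller indices; I would track this by induction on the number of remaining steps, maintaining the invariant that the partially-peeled integral is bounded by $\int \prod_{t' < t} p_{t'}^\alpha q_{t'}^{1-\alpha} \cdot e^{(\alpha-1)\sum_{s\geq t}\rho_s}$ with the tail sum viewed as a function of $a_{<t}$, and finally bounding the tail sum by $B$ minus the already-accounted prefix — or, more cleanly, deferring the use of the budget to the very last step and bounding $\sum_t \rho_t \leq B$ all at once. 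A secondary technical point is handling the case $\alpha = 1$ (where RDP is just KL divergence) and ensuring the argument degrades gracefully, and checking measurability of $\rho_t$ as a function of the prefix; these are routine but should be stated.
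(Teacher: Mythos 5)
Your overall strategy coincides with the paper's: Theorem~\ref{theorem:main} is proved (via Theorem~\ref{theorem:composition}) by exactly the step-by-step conditioning you describe, and your key observation --- that for every \emph{fixed} prefix $a_{<t}$ the one-step guarantee yields $\int p_t^\alpha q_t^{1-\alpha}\,da_t \le e^{(\alpha-1)\rho_t(a_{<t})}$, which is legitimate precisely because $\rho_t$ is a function of $a_{<t}$ --- is the same one the paper exploits. The gap is in how you propagate this bound through the outer integrals, and it sits at the crux of the theorem. After peeling $a_k$ the integrand is $\prod_{t<k}p_t^\alpha q_t^{1-\alpha}\cdot e^{(\alpha-1)\rho_k(a_{<k})}$, and the one-step bound for step $k-1$ cannot be applied to it: $e^{(\alpha-1)\rho_k}$ depends on $a_{k-1}$, and $p_{k-1}^\alpha q_{k-1}^{1-\alpha}$ is not a probability density, so $\E_{a_{k-1}\sim q_{k-1}}\big[(p_{k-1}/q_{k-1})^\alpha\, e^{(\alpha-1)\rho_k}\big]$ does not factor into $e^{(\alpha-1)\rho_{k-1}}$ times a residual term. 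For the same reason your target intermediate bound $\E\big[e^{(\alpha-1)\sum_t\rho_t}\big]$ under a reference process is not reachable by peeling and is false in general: an adversary can positively correlate $\rho_t$ with the realized likelihood ratio of earlier steps (this is essentially the counterexample in Section~\ref{odometer} showing that summed parameters do not give a valid odometer). Moreover, your stated invariant $\int\prod_{t'<t}p_{t'}^\alpha q_{t'}^{1-\alpha}\cdot e^{(\alpha-1)\sum_{s\ge t}\rho_s}$ ``viewed as a function of $a_{<t}$'' is ill-posed, since $\rho_s$ for $s>t$ depends on the variables $a_t,\dots,a_{s-1}$ that have already been integrated out.

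The correct completion --- which is the paper's proof --- is to carry a \emph{subtractive prefix} correction rather than an additive tail: one shows that $M_t \defeq \loss^{(t)}(a^{(t)};S,S',\alpha)\,e^{-(\alpha-1)\sum_{j\le t}\rho_j}$ is a supermartingale with respect to $\F_t=\sigma(a^{(t)})$, because $\sum_{j\le t}\rho_j$ is $\F_{t-1}$-measurable and therefore the entire exponential factor passes through the conditional expectation at step $t$, where $e^{-(\alpha-1)\rho_t}$ cancels the one-step bound $\E[\loss_t\mid\F_{t-1}]\le e^{(\alpha-1)\rho_t}$. Iterated expectations give $\E[M_k]\le 1$, and only then is the almost-sure budget used, once, to conclude $\E[\loss^{(k)}]\le e^{(\alpha-1)B}\,\E[M_k]\le e^{(\alpha-1)B}$. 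Your passing remark about ``bounding the tail sum by $B$ minus the already-accounted prefix'' is the right instinct --- substituting $\rho_k\le B-\sum_{j<k}\rho_j$ at the innermost peel produces exactly this telescoping and is an unrolled form of the supermartingale argument --- but the proposal as written defers to an invariant that does not typecheck and to an intermediate inequality that fails; to be complete you must commit to the normalized form. (Your concern about the asymmetry of the two directions is handled, as in the paper, by simply running the argument once for each ordered pair $(S,S')$.)
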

Note that, when all privacy parameters are fixed, Theorem \ref{theorem:main} recovers the usual composition result for RDP \cite{mironov2017renyi}. Our RDP filter immediately implies a simple filter for approximate differential privacy that is as tight as any version of the advanced composition theorem obtained via concentrated differential privacy \cite{bun2016concentrated} (see Theorem~\ref{theorem:dp-filter-general}). These R\'enyi-divergence-based composition analyses are known to improve upon the classical rate of \citet{dwork2010boosting} and, in particular, improve on the rate in~\citep{rogers2016privacy}.

We instantiate our general result for fully adaptive composition in the setting of individual privacy accounting. This allows us to define an \emph{individual privacy filter}, which, given a fixed privacy budget, adaptively drops points from the analysis once their \emph{personalized} privacy loss estimate exceeds the budget. Therefore, instead of keeping track of a single running privacy loss estimate for all individuals, we track a less conservative, personalized estimate for each individual in the dataset. Individual privacy filtering allows for better, adaptive utilization of data points for a given budget. It can also naturally be applied to privacy accounting in the local differential privacy model, whereby each user stops responding once their local implementation of the filter indicates that their personal privacy budget is exhausted.

Individual privacy parameters are particularly easy to compute for linear
queries, as well as their high-dimensional generalizations. We show that our technique gives an algorithm for answering a sequence of adaptively-chosen linear queries that are sparse across time, meaning that, for any user, the number of queries that are non-zero on that user's data is small. Such queries arise, for example, when a platform counts the number of users that participate in certain activities (the type of activity being adaptive to the data collected in the previous days) and users generally participate in a small number of activities. Formally, a special case of our result implies the following theorem.
\begin{theorem}
\label{theorem:adaptive_queries_main}
There exists an algorithm $\A$ that, given a dataset $S=(X_1,\ldots,X_n)\in \X^n$, sparsity parameter $s$ and privacy level $\kappa$, for any adaptively-chosen sequence of queries $q_1,\ldots,q_k$ of arbitrary length $k$, where $q_i \colon \X \to \{0,1\}$, provides a sequence of answers $a_1,\ldots,a_k$ such that: $(1)$ $\A$ is $(\alpha,\alpha\kappa)$-RDP for all $\alpha \geq 1$;
$(2)$ for all $t$ and any $\delta \in (0,1)$, the probability that $|a_t - \sum_{X_i\in S_t} q_t(X_i)|> \sqrt{s\log(1/\delta)/\kappa}$ is at most $\delta$, where $S_t = (X_i\in S : \sum_{j=1}^t q_j(X_i) \leq s )$.
\end{theorem}
We note that the provided answers are guaranteed to be accurate only as long as the queries are truly sparse, meaning $\sum_{j=1}^t q_j(X_i) \leq s$ for (almost) all $i \in [n]$. This follows because the queries are accurate on the set $S_t$, hence $S_t$ needs to be similar to $S$ for the queries to be accurate on $S$. The privacy guarantee, on the other hand, holds for \emph{any} sequence of queries of \emph{any} length $k$. We describe a more general version of this result in Section~\ref{sec:adaptive-linear-queries}. A natural application of our general theorem is the setting of high-dimensional linear queries generated by gradient descent. We apply our theory to the analysis of private gradient descent \cite{abadi2016deep}, and show---both theoretically and empirically---that individual accounting can be easy to implement and can only make the resulting privacy-utility tradeoff tighter. Independently, without any individual accounting, in our empirical evaluations we also observe that private batch gradient descent, when tuned appropriately, outperforms private stochastic gradient descent in terms of the privacy-utility tradeoff. While we make this observation only on MNIST, we believe this phenomenon holds more generally and is worth further investigation.

\subsection{Related work}
The main motivation behind our work is obtaining tighter privacy accounting methods through, broadly speaking, ``personalized'' accounting of privacy losses. Existing literature in differential privacy discusses several related notions \cite{ghosh2011selling, ebadi2015differential,wang2019per,cummings2020individual}, although typically with an incomparable objective. \citet{ghosh2011selling} discuss individual privacy in the context of selling privacy at auction and their definition does not depend on the value of the data point but only on its index in the dataset. \citet{cummings2020individual} rely on a similar privacy definition, investigate an associated definition of individual sensitivity, and demonstrate a general way to preprocess an arbitrary function of a dataset into a function that has the desired bounds on individual sensitivities.

\citet{ebadi2015differential} introduce personalized differential privacy in the context of private database queries and describe a system which drops points when their personalized privacy loss exceeds a budget. In their system personalized privacy losses result from record selection operations applied to the database. While this type of accounting is similar to ours in spirit, their work only considers basic and non-adaptive composition. The work of \citet{wang2019per} considers the privacy loss of a specific data point relative to a fixed dataset and provides techniques for evaluating this ``per-instance'' privacy loss for several statistical problems. \citet{wang2019per} also briefly discusses adaptive composition of per-instance differential privacy as a straightforward generalization of the usual advanced composition theorem \cite{dwork2010boosting}, but the per-instance privacy parameters are assumed to be \emph{fixed}. As discussed above, having fixed per-instance privacy parameters, while allowing adaptive composition, is likely to negate the benefits of personalized privacy estimates. The work of \citet{ligett2020bounded} tightens individuals' personalized privacy loss by taking into account subsets of analyses in which an individual does not participate. Our work naturally captures this setting while allowing full adaptivity. Moreover, they consider the usual worst-case privacy loss, rather than an individual one, and the analyses in which a user participates are determined in a data-independent way.

Our work can be seen as related to data-dependent approaches to analyses of privacy-preserving algorithms such as smooth sensitivity \cite{nissim2007smooth}, the propose-test-release framework \cite{dwork2009differential}, and {\em ex-post} privacy guarantees \cite{wu2019accuracy}. Our results are complementary in that we aim to capture the dependence of the output on the value of each individual's data point as opposed to the ``easiness'' of the entire dataset. Our approach also relies on composition to exploit the gains from individual privacy loss accounting.

Finally, adaptive composition of differentially private algorithms is a key tool for establishing statistical validity of an adaptively-chosen sequence of statistical analyses \cite{dwork2015preserving, dwork2015generalization, bassily2016algorithmic}. In this context, \citet{feldman2018calibrating} show that individual KL-divergence losses (or RDP losses for $\alpha=1$) compose adaptively and can be used to derive tighter generalization results. However, their results still require that the average of individual KL-divergences be upper bounded by a fixed worst-case value and the analysis appears to be limited to the $\alpha=1$ case.

\section{Preliminaries}
\label{preliminaries}
We will denote by $S=(X_1,\dots,X_n)$ the analyzed dataset, and by $S^{-i} \defeq (X_1,\dots,X_{i-1},X_{i+1},\dots,X_n)$ the analyzed dataset after removing point $X_i$. We will generally focus on algorithms that can take as input a dataset of arbitrary size. If, instead, the algorithm requires an input of fixed size, one can obtain the same results for algorithms  that replace $X_i$ with an arbitrary element $X^\star$ fixed in advance (for example 0).

We start by reviewing some preliminaries on differential privacy.

\begin{definition}[\citep{DMNS06,ODO06}]
\label{def:dp}
A randomized algorithm $\A$ is $(\epsilon,\delta)$-differentially private (DP) if for all datasets $S = (X_1,\dots,X_{n})$,
$$\Prob{\A(S) \in E}\leq e^\epsilon\Prob{\A(S^{-i}) \in E} + \delta,\text{ and } \Prob{\A(S^{-i})\in E}\leq e^\epsilon\Prob{\A(S)\in E} + \delta,$$
for all $i\in[n]$ and all measurable sets $E$.
\end{definition}

Our analysis will rely on R\'enyi differential privacy (RDP), a relaxation of DP based on R\'enyi divergences which often leads to tighter privacy bounds than analyzing DP directly. Formally, the R\'enyi divergence of order $\alpha\in(1,\infty)$ between two measures $\mu$ and $\nu$ such that $\mu\ll \nu$ is defined as:
$$D_\alpha(\mu \| \nu) = \frac{1}{\alpha-1} \log \int \left(\frac{d\mu}{d\nu}\right)^\alpha d\nu.$$
The R\'enyi divergence of order $\alpha=1$ is defined by continuity, and recovers the Kullback-Leibler (KL) divergence. Relying on a common abuse of notation, we will use $\A(\cdot)$ to refer to the output distribution of a randomized algorithm. Thus, $D_\alpha(\A(S)\|\A(S^{-i}))$ denotes the divergence between the output distribution of $\A$ on input $S$ and the output distribution of $\A$ on input $S^{-i}$. Similarly, we will use $a\sim \A(S)$ to denote $a$ being sampled randomly from the output distribution of
$\A$ on $S$.
We also use the following shorthand notation for the maximum of the two directions of R\'enyi divergence:
$$D_\alpha^\leftrightarrow(\mu\|\nu) \defeq \max\left\{D_\alpha(\mu\|\nu), D_\alpha(\nu\|\mu)\right\}.$$

\begin{definition}[\citep{mironov2017renyi}]
\label{def:rdp-def}
A randomized algorithm $\A$ is $(\alpha,\rho)$-R\'enyi differentially private (RDP) if for all datasets $S = (X_1,\dots,X_{n})$,
\begin{align*}
D_\alpha^\leftrightarrow\left(\A(S)\|\A(S^{-i})\right) \leq \rho,
\end{align*}
for all $i\in[n]$.
\end{definition}

A related notion that we will make use of is zero-concentrated differential privacy (zCDP).

\begin{definition}[\citep{bun2016concentrated}]
A randomized algorithm $\A$ satisfies $\kappa$-zero-concentrated differential privacy (zCDP) if it satisfies $(\alpha,\alpha\kappa)$-RDP for all $\alpha\geq 1$.
\end{definition}

R\'enyi differential privacy implies differential privacy; therefore, although our guarantees will be stated in terms of RDP, the conversion to DP is immediate.

\begin{fact}[\citep{mironov2017renyi}]
\label{fact:conversion}
If algorithm $\A$ is $(\alpha,\rho)$-RDP, then it is also $\left(\rho + \frac{\log(1/\delta)}{\alpha-1},\delta\right)$-DP, for any $\delta\in(0,1)$.
\end{fact}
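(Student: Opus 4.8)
The plan is to reduce the claim to a one-dimensional estimate on the likelihood ratio between the two output distributions, and then optimize a threshold. Fix an arbitrary dataset $S = (X_1,\dots,X_n)$, an index $i \in [n]$, and a measurable set $E$; write $P$ for the distribution of $\A(S)$ and $Q$ for the distribution of $\A(S^{-i})$. We may assume $\alpha > 1$, since for $\alpha = 1$ the claimed bound has $\epsilon = \infty$ and is vacuous (and likewise if $\rho=\infty$). By hypothesis $D_\alpha(P\|Q) \le \rho < \infty$, so $P \ll Q$ and the Radon--Nikodym derivative $Z \defeq dP/dQ$ is well defined, nonnegative, and finite $Q$-almost surely; moreover, by definition of the R\'enyi divergence,
$$ \E_Q\bigl[Z^\alpha\bigr] \;=\; \exp\!\bigl((\alpha-1)\,D_\alpha(P\|Q)\bigr) \;\le\; e^{(\alpha-1)\rho}. $$

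Next I would split $E$ according to whether $Z$ exceeds a threshold $t > 0$ to be chosen. On the event $\{Z \le t\}$, using $dP = Z\,dQ$,
$$ P\bigl(E \cap \{Z \le t\}\bigr) \;=\; \int_{E \cap \{Z \le t\}} Z \, dQ \;\le\; t\, Q(E). $$
On the complementary event, I would bound $P(Z > t)$ by applying Markov's inequality to $Z^{\alpha - 1}$ under $P$ (valid since $\alpha - 1 > 0$ makes $x \mapsto x^{\alpha-1}$ increasing on $[0,\infty)$), and then change measure back to $Q$:
$$ P(Z > t) \;=\; P\bigl(Z^{\alpha-1} > t^{\alpha-1}\bigr) \;\le\; \frac{\E_P\bigl[Z^{\alpha-1}\bigr]}{t^{\alpha-1}} \;=\; \frac{\E_Q\bigl[Z^{\alpha}\bigr]}{t^{\alpha-1}} \;\le\; \frac{e^{(\alpha-1)\rho}}{t^{\alpha-1}}. $$
Choosing $t$ so that the last expression equals $\delta$, i.e.\ $t = e^{\rho}\,\delta^{-1/(\alpha-1)} = e^{\,\rho + \log(1/\delta)/(\alpha-1)}$, and adding the two bounds gives $P(E) \le t\, Q(E) + \delta = e^{\epsilon} Q(E) + \delta$ with $\epsilon = \rho + \frac{\log(1/\delta)}{\alpha - 1}$, which is the first inequality required by Definition~\ref{def:dp}.

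For the reverse inequality $\Prob{\A(S^{-i}) \in E} \le e^\epsilon \Prob{\A(S) \in E} + \delta$ I would run the identical argument with $P$ and $Q$ interchanged, now invoking $D_\alpha(Q\|P) \le \rho$ --- available precisely because RDP controls the symmetric quantity $D_\alpha^\leftrightarrow$. Since $S$, $i$, and $E$ were arbitrary, $\A$ is $(\epsilon,\delta)$-DP. I do not expect a genuine obstacle here; the one point requiring care is to apply Markov's inequality to $Z^{\alpha-1}$ rather than to $Z^{\alpha}$, so that the subsequent change of measure produces exactly the moment $\E_Q[Z^\alpha]$ that the R\'enyi divergence bounds --- everything else is a routine one-parameter optimization.
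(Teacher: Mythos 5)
Your proof is correct and is essentially the standard argument behind this fact (Proposition~3 of \citet{mironov2017renyi}); the paper itself states it as a known result without proof, and your derivation --- bounding $P(E\cap\{Z\le t\})$ by $t\,Q(E)$ and the tail $P(Z>t)$ via Markov's inequality applied to $Z^{\alpha-1}$ under $P$, followed by the change of measure to $\E_Q[Z^\alpha]\le e^{(\alpha-1)\rho}$ and the choice $t=e^{\rho+\log(1/\delta)/(\alpha-1)}$ --- is exactly how the cited result is established. Your use of $D_\alpha^{\leftrightarrow}$ to get the reverse inequality by swapping $P$ and $Q$ correctly covers both conditions in Definition~\ref{def:dp}, so there is nothing to fix.
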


One of the successes of differential privacy (and RDP as well) lies in its adaptive composition property. In Algorithm \ref{alg:generalcomp} we define adaptive composition, which is at the center of our analysis.

\begin{algorithm}[H]
\SetAlgoLined
\SetKwInOut{Input}{input}
\Input{dataset $S \in\X^n$, sequence of algorithms $\A_t, t=1,2,\dots,k$}
\For{$t=1,\dots,k$}
{
Compute $a_t = \A_t(a_1,\dots,a_{t-1},S)$
}
Return $\A^{(k)}(S) = (a_1,\dots,a_k)$
\caption{Adaptive composition $\A^{(k)}$}
\label{alg:generalcomp}
\end{algorithm}
If $\A_t(a_1,\dots,a_{t-1},\cdot)$ is $(\alpha,\rho_t)$-RDP for \emph{all} values of $a_1,\dots,a_{t-1}$, then the standard adaptive composition theorem for RDP says that $\A^{(k)}$ is $(\alpha,\sum_{t=1}^k \rho_t)$-RDP \cite{mironov2017renyi}. Note that, by definition, the parameters $\rho_1,\dots,\rho_k$ are independent of the specific reports $a_1,\dots,a_{k}$ obtained in the adaptive computation. In other words, they are fixed in advance.

\subsection{Individual privacy losses}
Our individual accounting relies on measuring the maximum possible effect of an individual data point on a dataset statistic in terms of R\'enyi divergence. This measure is equivalent to an RDP version of personalized differential privacy \cite{ebadi2015differential}. For convenience we will refer to it as individual R\'enyi differential privacy, or individual RDP for short. We note, however, that, by itself, a bound on this divergence does \emph{not} imply any formal privacy guarantee for an individual, since the individual RDP parameter depends on the sensitive value of the data point.

\begin{definition}[Individual RDP]
\label{def:stronger-ind-privacy}
Fix $n\in\N$ and a data point $X$. We say that a randomized algorithm $\A$ satisfies $(\alpha, \rho)$-\emph{individual R\'enyi differential privacy for $X$} if for all datasets $S=(X_1,\ldots,X_m)$ such that $m\leq n$ and $X_i=X$ for some $i$, it holds that
$$D^\leftrightarrow_\alpha\left(\A(S)\|\A(S^{-i})\right) \leq \rho.$$
\end{definition}

Therefore, to satisfy the standard definition of RDP, an algorithm needs to satisfy individual RDP for all data points $X$.

Our main focus will be on individual privacy losses as introduced in Definition \ref{def:stronger-ind-privacy}, however some of our results also hold under a weaker notion of individual privacy loss, which measures the effect of a data point on the output of a statistical analysis, relative to a \emph{fixed} dataset. This notion is an RDP version of per-instance differential privacy \cite{wang2019per}.

\begin{definition}[Individual RDP (per-instance)]
\label{def:general-ind-privacy}
Fix a dataset $S=(X_1,\dots,X_n)$. We say that a randomized algorithm $\A$ satisfies $(\alpha, \rho)$-\emph{individual R\'enyi differential privacy for $(S,X_i)$} if it holds that
$$D^\leftrightarrow_\alpha\left(\A(S)\|\A(S^{-i})\right) \leq \rho.$$
\end{definition}

We will note which results hold under Definition \ref{def:general-ind-privacy}, in addition to being valid under Definition \ref{def:stronger-ind-privacy}.

Before we turn to analyzing composition, we give a simple example of individual RDP computation. For simplicity, we focus on Gaussian noise addition. Similar computations can be carried out for other randomization mechanisms.

\begin{example}[Linear queries]
\label{ex:linear-query}
Let $S=(X_1,\dots,X_n)\in\X^n$. Suppose that $\A$ is a $d$-dimensional linear query with Gaussian noise addition, $\A(S) = \sum_{j\in [n]} q(X_j) + \xi$, for some $q\colon \X \to \R^d$ and $\xi\sim N(0,\sigma^2 \I_d)$.
Then, $\A$ satisfies
$$\left(\alpha,D^\leftrightarrow_\alpha\left(N\left(\sum_{j\in[n]} q(X_j),\sigma^2\I_d\right)~\Bigg\|~ N\left(\sum_{j\in[n], j\neq i} q( X_j),\sigma^2\I_d\right)\right)\right) = \left(\alpha,\frac{\alpha \|q(X_i)\|_2^2}{2\sigma^2}\right)$$
individual RDP for $X_i$. Note that in this case individual RDP (Definition \ref{def:stronger-ind-privacy}) and per-instance RDP (Definition \ref{def:general-ind-privacy}) have the same value.
\end{example}

The analysis above extends to arbitrary Lipschitz functions.
\begin{example}[Lipschitz analyses]
Suppose that $g:(\R^d)^n\rightarrow \R^{d'}$ is $L_i$-Lipschitz in coordinate $i$ (in $\ell_2$-norm). For  $q\colon \X \to \R^d$, let $\A(S) = g(q(X_1),\dots,q(X_n)) + \xi$, $\xi\sim N(0,\sigma^2 \I_{d'})$. Assume that for some $X^\star$, $q(X^\star)$ is the origin.
Then, by using $X^\star$ to replace a removed element (namely, $S^{-i} = (X_1,\dots,X_{i-1},X^\star,X_{i+1},\dots,X_n)$), we get that  $\A$ satisfies $\left(\alpha,\frac{\alpha L_i^2 \|q(X_i)\|_2^2}{2\sigma^2}\right)$-individual RDP for $X_i$.
\end{example}

\section{Fully adaptive composition for R\'enyi differential privacy}
\label{composition}

Our main technical contribution is a new adaptive composition theorem for R\'enyi differential privacy, which bounds the overall privacy loss in terms of the individual privacy losses of all data points. As argued earlier, the main challenge in understanding how individual privacy parameters compose is the fact that these parameters are random, rather than fixed. In what follows, we first state a general version of our main theorem, which bounds the privacy loss in adaptive composition in terms of a bound on the sequence of possibly random privacy parameters. Then, we instantiate this result in the context of individual privacy.

We set up some notation within the context of adaptive composition (Algorithm \ref{alg:generalcomp}). We denote by $a^{(t)}\defeq(a_1,\dots,a_t)$ the sequence of the first $t$ reports, and by $\A^{(t)}(\cdot)\defeq (\A_1(\cdot),\A_2(\A_1(\cdot),\cdot),\dots,\A_t(\A_1(\cdot),\dots,\cdot))$ the composed algorithm which produces $a^{(t)}$. For two datasets $S$ and $S'$, parameter $\alpha\geq 1$ and \emph{fixed} $a^{(t)}$, we let\footnote{All algorithms we will be considering in this paper, if not discrete, induce a density with respect to the Lebesgue measure. For such instances, replacing expressions such as $\Prob{\A^{(t)}(S)=a}$ with the density of $\A^{(t)}(S)$ at $a$ gives the analysis in the continuous case.}
$$\loss^{(t)}(a^{(t)};S,S',\alpha)\defeq\left(\frac{\Prob{\A^{(t)}(S) = a^{(t)}}}{\Prob{\A^{(t)}(S') = a^{(t)}}}\right)^\alpha.$$
Similarly, for fixed $a^{(t)}$ we also define
$$\loss_t(a^{(t)};S,S',\alpha)\defeq\left(\frac{\Prob{\A_t(a_1,\dots,a_{t-1},S) = a_t}}{\Prob{\A_t(a_1,\dots,a_{t-1},S') = a_t}}\right)^\alpha.$$
Roughly speaking, $\loss^{(t)}$ denotes the total privacy loss incurred by the first $t$ rounds of adaptive composition, while $\loss_t$ denotes the loss incurred in round $t$ (which, due to adaptivity, depends on the outcomes of the first $t-1$ rounds).

Generally, we will be interested in $\loss_t(a^{(t)};S,S',\alpha)$ and $\loss^{(t)}(a^{(t)};S,S',\alpha)$ when $a^{(t)}$ is output by adaptive composition; in such cases, these two quantities are random.

Note that, since
\begin{align*}
    \Prob{\A^{(t)}(S) = a^{(t)}} &= \Prob{\A^{(t-1)}(S) = a^{(t-1)}}\Prob{\A_t(\A_1(S),\dots,\A_{t-1}(\A_1(S),\dots),S) = a_t~|~\A^{(t-1)}(S) = a^{(t-1)}}\\
    &= \Prob{\A^{(t-1)}(S) = a^{(t-1)}}\Prob{\A_t(a_1,a_2,\dots,a_{t-1},S) = a_t},
\end{align*}
we have $\loss^{(t)}(a^{(t)};S,S',\alpha)=\loss^{(t-1)}(a^{(t-1)};S,S',\alpha)\cdot\loss_t(a^{(t)};S,S',\alpha)$.

We let $\rho_t$ denote the RDP parameter of order $\alpha$ of $\A_t$, conditional on the past reports. For the sake of generality and simplicity of exposition, we introduce an abstract space $\cS$ over pairs of datasets and let
\begin{align}
\label{eqn:generalrho}
    \rho_t &\defeq \sup_{(S,S')\in\cS} D_\alpha^\leftrightarrow\left(\A_t(a_1,\dots,a_{t-1}, S)\|\A_t(a_1,\dots,a_{t-1}, S')\right)\\
    &= \frac{1}{\alpha-1}\log~ \sup_{(S,S')\in\cS}~\max\left\{\E_{a^{(t)}\sim \A^{(t)}(S')} \left[\loss_t(a^{(t)};S,S',\alpha)~\Big|~a^{(t-1)}\right], \E_{a^{(t)}\sim \A^{(t)}(S)} \left[\loss_t(a^{(t)};S',S,\alpha)~\Big|~a^{(t-1)}\right]\right\}. \nonumber
\end{align}
In the context of individual privacy (Definition \ref{def:stronger-ind-privacy}), we will instantiate $\cS$ to be the space of all dataset pairs where either dataset is obtained by deleting $X_i$ from the other. For the per-instance notion (Definition \ref{def:general-ind-privacy}), we will set $\cS = \left\{(S,S^{-i})\right\}$. In the context of usual RDP, $\cS$ will be the space of all pairs of datasets that differ in the presence of one element.

The classical composition theorem for R\'enyi differential privacy---while allowing $\A_t$ to depend on the previous reports---constrains $\A_t$ to be $(\alpha,\rho_t)$-RDP for some \emph{fixed} $\rho_t$. Here, we make no such constraints on $\rho_t$; hence, $\rho_t$ will in general be a random variable, due to the randomness in $a_1,\dots,a_{t-1}$.

Theorem \ref{theorem:composition} states that, as long as $\sum_{t=1}^k \rho_t$ is maintained under a fixed budget, the output of adaptive composition preserves privacy.

\begin{theorem}
\label{theorem:composition}
Fix any $B\geq 0,\alpha \geq 1$, and a set of pairs of datasets $\cS$. For any sequence of algorithms $\A_1,\ldots,\A_k$, if $\sum_{t=1}^k \rho_t \leq B$ holds almost surely, where the sequence $\rho_1,\ldots,\rho_k$ is defined in eq.~\eqref{eqn:generalrho}, then the adaptive composition $\A^{(k)}$ given in Algorithm~\ref{alg:generalcomp} satisfies
$$D_\alpha^\leftrightarrow\left(\A^{(k)}(S)\|\A^{(k)}(S')\right) \leq B,$$
for all $(S,S')\in\cS$.
\end{theorem}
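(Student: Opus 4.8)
The plan is to bound the R\'enyi divergence $D_\alpha(\A^{(k)}(S)\|\A^{(k)}(S'))$ by controlling the expectation (under $\A^{(k)}(S')$) of the total loss $\loss^{(k)}(a^{(k)};S,S',\alpha)$, since by definition $D_\alpha(\A^{(k)}(S)\|\A^{(k)}(S')) = \frac{1}{\alpha-1}\log \E_{a^{(k)}\sim\A^{(k)}(S')}\bigl[\loss^{(k)}(a^{(k)};S,S',\alpha)\bigr]$. So it suffices to show this expectation is at most $e^{(\alpha-1)B}$. The key structural fact already recorded in the excerpt is the multiplicative decomposition $\loss^{(k)} = \loss^{(k-1)}\cdot\loss_k$, which suggests peeling off rounds one at a time and conditioning on the past.

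First I would set up a supermartingale-type argument. Define, for $0 \le t \le k$, the random variable
$$
M_t \defeq \loss^{(t)}(a^{(t)};S,S',\alpha)\cdot e^{-(\alpha-1)\sum_{s=1}^t \rho_s},
$$
where the $a_s$ are generated by the adaptive composition run on $S'$ and, crucially, $\rho_s$ is $a^{(s-1)}$-measurable (it is defined in \eqref{eqn:generalrho} as a function of $a^{(t-1)}$ via the conditional expectation). I would then show that $\E[M_t \mid a^{(t-1)}] \le M_{t-1}$: using the decomposition, $M_t = M_{t-1}\cdot \loss_t(a^{(t)};S,S',\alpha)\cdot e^{-(\alpha-1)\rho_t}$, and since $M_{t-1}$ and $\rho_t$ are $a^{(t-1)}$-measurable, taking the conditional expectation over $a_t \sim \A_t(a^{(t-1)},S')$ reduces to showing $\E_{a_t}\bigl[\loss_t(a^{(t)};S,S',\alpha)\mid a^{(t-1)}\bigr] \le e^{(\alpha-1)\rho_t}$ — which is exactly how $\rho_t$ is defined (it is the supremum over $\cS$ of the log of this conditional expectation, divided by $\alpha-1$). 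Hence $\E[M_k] \le \E[M_0] = 1$.

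Next I would combine this with the budget constraint. Since $\sum_{t=1}^k \rho_t \le B$ holds almost surely, we have $e^{-(\alpha-1)\sum_{t=1}^k \rho_t} \ge e^{-(\alpha-1)B}$ (here $\alpha \ge 1$ so $\alpha-1\ge 0$, and the $\alpha=1$ case is handled separately by continuity / the KL version of the same telescoping argument). Therefore
$$
e^{-(\alpha-1)B}\,\E_{a^{(k)}\sim\A^{(k)}(S')}\bigl[\loss^{(k)}(a^{(k)};S,S',\alpha)\bigr] \le \E[M_k] \le 1,
$$
which rearranges to $\E[\loss^{(k)}] \le e^{(\alpha-1)B}$ and hence $D_\alpha(\A^{(k)}(S)\|\A^{(k)}(S')) \le B$, as claimed.

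The main obstacle is making the conditioning fully rigorous — in particular verifying that $\rho_t$ as defined in \eqref{eqn:generalrho} really is a legitimate $a^{(t-1)}$-measurable random variable that dominates the per-round conditional loss pointwise in the realized history, and that the tower property applies cleanly despite $\loss_t$ being a ratio of densities (the measurability and absolute-continuity caveats in the footnote, and the possibility that some $\rho_t$ is infinite, need care). A secondary subtlety is that $\rho_t$ is defined via a supremum over all $(S,S')\in\cS$, whereas we only need the inequality for the particular pair we fixed; this is fine since the sup dominates, but one should note that the almost-sure budget bound $\sum \rho_t \le B$ is a bound on these worst-case-over-$\cS$ quantities, which is what makes the statement uniform over $\cS$. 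I would also remark that when all $\rho_t$ are deterministic constants this recovers the standard RDP composition theorem of \citet{mironov2017renyi}, as a sanity check.
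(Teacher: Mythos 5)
Your proposal is correct and follows essentially the same route as the paper: the identical supermartingale $M_t = \loss^{(t)}(a^{(t)};S,S',\alpha)\,e^{-(\alpha-1)\sum_{j\le t}\rho_j}$, the same multiplicative peeling of $\loss^{(k)}$ via conditioning on $a^{(t-1)}$, and the same final rearrangement using the almost-sure budget bound. The extra care you flag (measurability of $\rho_t$, the $\alpha=1$ limit, and the fact that the sup over $\cS$ only helps) is consistent with, and slightly more explicit than, the paper's treatment.
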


\begin{proof}
Fix any $(S,S')\in\cS$. In what follows, we take $a^{(t)} = (a_1,\dots,a_t)$ to be distributed as the random output of adaptive composition applied to $S'$, that is $a^{(t)}\sim \A^{(t)}(S')$. Consequently, $~\loss^{(t)}(a^{(t)};S,S',\alpha)$ and $\loss_t(a^{(t)};S,S',\alpha)$ are also random.

Let $M_t \defeq \loss^{(t)}(a^{(t)};S,S',\alpha) e^{-(\alpha-1)\sum_{j=1}^t \rho_j}$, and let $M_0=1$. Consider the filtration $\Gamma_t = \sigma(a^{(t)})$. We prove that $M_t$ is a supermartingale with respect to $\Gamma_t$; that is, we show $ \E[M_t ~|~\Gamma_{t-1}] \leq M_{t-1}$. This follows since:
\begin{align*}
    \E[M_t ~|~\Gamma_{t-1}] &= \E\left[\loss^{(t)}(a^{(t)};S,S',\alpha) ~e^{-(\alpha-1)\sum_{j=1}^t \rho_j} ~\Bigg|~\Gamma_{t-1}\right]\\
    &= \E\left[\loss^{(t-1)}(a^{(t-1)};S,S',\alpha) ~ \loss_t(a^{(t)};S,S',\alpha) ~e^{-(\alpha-1)\sum_{j=1}^t \rho_j} ~\Bigg|~\Gamma_{t-1}\right]\\
    &= \loss^{(t-1)}(a^{(t-1)};S,S',\alpha)~ e^{-(\alpha-1)\sum_{j=1}^t \rho_j}\E\left[\loss_t(a^{(t)};S,S',\alpha) ~\Bigg|~\Gamma_{t-1}\right]\\
    &\leq \loss^{(t-1)}(a^{(t-1)};S,S',\alpha)~ e^{-(\alpha-1)\sum_{j=1}^t \rho_j}e^{(\alpha-1)\rho_t}\\
    &= \loss^{(t-1)}(a^{(t-1)};S,S',\alpha)~ e^{-(\alpha-1)\sum_{j=1}^{t-1} \rho_j}\\
    &= M_{t-1},
\end{align*}
where the third equality uses the fact that $(\rho_j)_{j=1}^t\in\Gamma_{t-1}$, and the inequality applies the definition of $\rho_t$. Therefore, by applying iterated expectations, we can conclude
$$\E[M_k] = \E_{a^{(k)}\sim \A^{(k)}(S')}\left[\loss^{(k)}\left(a^{(k)};S,S',\alpha\right) e^{-(\alpha-1)\sum_{j=1}^k \rho_j}\right]\leq \E[M_0] = 1.$$
Since $\sum_{j=1}^k \rho_j\leq B$ by assumption, this inequality implies
$$\E_{a^{(k)}\sim \A^{(k)}(S')}\left[\loss^{(k)}\left(a^{(k)};S,S',\alpha\right)\right]\leq e^{(\alpha-1)B}.$$
After normalizing, we get
$$D_\alpha\left(\A^{(k)}(S),\A^{(k)}(S')\right) = \frac{1}{\alpha -1}\log~\E_{a^{(k)}\sim \A^{(k)}(S')}\left[\loss^{(k)}\left(a^{(k)};S,S',\alpha\right) \right]\leq B.$$
The same argument can be used to bound the other direction of the divergence. Since the choice of $(S,S')$ was arbitrary, we can conclude $\sup_{(S,S')\in\cS}D^\leftrightarrow_\alpha\left(\A^{(k)}(S),\A^{(k)}(S')\right)\leq B$, as desired.
\end{proof}

A related argument is presented by \citet{cesar2020unifying} (see Lemma 3.1), who analyze privacy composition when a pre-specified set of concentrated differential privacy (CDP) parameters is adaptively ordered.

\begin{remark}
We remark that Theorem \ref{theorem:composition} is satisfied for any set $\cS$, including the singleton $\cS = \{(S,S')\}$. Denote by $\rho_t(S,S')$ the privacy parameter as defined in equation \eqref{eqn:generalrho} when $\cS = \{(S,S')\}$. Then, Theorem \ref{theorem:composition} implies that the adaptive composition $\A^{(k)}$ satisfies $(\alpha,B)$-RDP if for all datasets $(S,S')$ differing in the presence of one individual, $\sum_{t=1}^k \rho_t(S,S') \leq B$. In other words, in principle it is possible to place the maximum over $(S,S')\in\cS$ from the definition \eqref{eqn:generalrho} in front of the sum over privacy parameters. However, while this is formally a less conservative privacy accounting method, it remains unclear if it can be efficiently implemented in practice.
\end{remark}

We now instantiate Theorem \ref{theorem:composition} in the context of individual privacy.

To simplify notation, for a fixed point $X$, we let $\cS(X,n)$ denote the set of all dataset pairs $(S,S')$ such that $|S|\leq n$ and $S'$ is obtained by deleting element $X$ from $S$. More precisely, $(S,S')\in\cS(X,n)$ if $S=(X_1,\dots,X_m)$, where $m\leq n$ and $X_{i} = X$ for some $i$, and $S' = S^{-i}$.

We use $\rho_t^{(i)}$ to denote the individual privacy parameter of the $t$-th adaptively composed algorithm $\A_t$ with respect to $X_i$, conditional on the past reports. Formally, for fixed $\alpha\geq 1$ and for any data point $X_i\in S$ we let:
\small
\begin{align}
\label{eqn:stronger-ind-rho}
    \rho_t^{(i)} &\defeq \sup_{(S,S')\in\cS(X_i,n)} D_\alpha^\leftrightarrow \left(\A_t(a_1,\dots,a_{t-1},S)\|\A_t(a_1,\dots,a_{t-1},S')\right)\\
    &= \frac{1}{\alpha-1}\log\sup_{(S,S')\in\cS(X_i,n)} \max\left\{\E_{a^{(t)}\sim \A^{(t)}(S')} \left[\loss_t(a^{(t)};S,S',\alpha)~\Big|~a^{(t-1)}\right], \E_{a^{(t)}\sim \A^{(t)}(S)} \left[\loss_t(a^{(t)};S',S,\alpha)~\Big|~a^{(t-1)}\right]\right\}.\nonumber
\end{align}
\normalsize
Since $\rho_t^{(i)}$ is an instance of the general definition \eqref{eqn:generalrho} obtained by specifying $\cS$, a direct corollary of Theorem \ref{theorem:composition} is as follows.

\begin{corollary}
\label{corollary:ind-composition}
Fix any $B\geq 0$. If for any input dataset $S=(X_1,\dots,X_n)$, $\sum_{t=1}^k \rho_t^{(i)} \leq B$ holds almost surely for all individuals $i\in[n]$, then the adaptive composition $\A^{(k)}$ given in Algorithm~\ref{alg:generalcomp} is $(\alpha,B)$-R\'enyi differentially private.
\end{corollary}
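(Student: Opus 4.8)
The plan is to obtain the corollary as a direct instantiation of Theorem~\ref{theorem:composition}, applied separately for each individual $i\in[n]$. First I would fix $i\in[n]$ and choose the abstract set of dataset pairs to be
\[
\cS_i \defeq \bigl\{(S,S^{-i}) : S\supseteq X_i,\ |S|\leq n\bigr\}\ \cup\ \bigl\{(S^{-i},S) : S\supseteq X_i,\ |S|\leq n\bigr\}.
\]
The reason for this choice is that, with $\cS=\cS_i$, the abstract random parameter $\rho_t$ of~(\ref{eqn:generalrho}) is exactly the individual parameter $\rho_t^{(i)}$ of~(\ref{eqn:stronger-ind-rho}): the single $\sup_{(S,S')\in\cS_i}$ in~(\ref{eqn:generalrho}) decomposes into the two ordered directions $(S,S^{-i})$ and $(S^{-i},S)$, each supremized over all datasets $S\supseteq X_i$ with $|S|\leq n$, which is precisely the inner $\max\{\cdot,\cdot\}$ in~(\ref{eqn:stronger-ind-rho}). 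Establishing this identity is the one step I would write out carefully; everything else is quotation of Theorem~\ref{theorem:composition}.

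Given the identity $\rho_t=\rho_t^{(i)}$, the hypothesis that $\sum_{t=1}^k\rho_t^{(i)}\leq B$ holds almost surely (for every input dataset, and in particular for every $S'$ appearing as the second coordinate of a pair in $\cS_i$) is exactly the hypothesis required by Theorem~\ref{theorem:composition} for the space $\cS_i$. Invoking that theorem yields $D_\alpha\bigl(\A^{(k)}(S)\,\|\,\A^{(k)}(S')\bigr)\leq B$ for every $(S,S')\in\cS_i$. Since $\cS_i$ contains both orderings of each pair $\{S,S^{-i}\}$, this is the same as saying $D_\alpha^\leftrightarrow\bigl(\A^{(k)}(S)\,\|\,\A^{(k)}(S^{-i})\bigr)\leq B$ for all $S\supseteq X_i$ with $|S|\leq n$, i.e.\ $\A^{(k)}$ satisfies $(\alpha,B)$-individual RDP for $X_i$ in the sense of Definition~\ref{def:stronger-ind-privacy}.

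Finally, since $i\in[n]$ was arbitrary, $\A^{(k)}$ satisfies individual RDP for every data point; as noted immediately after Definition~\ref{def:stronger-ind-privacy}, an algorithm that is individual RDP for all data points is $(\alpha,B)$-RDP, which is the desired conclusion. I do not anticipate a substantive obstacle here: the corollary is genuinely a corollary, and the only place care is needed is in checking that the explicit two-directional definition~(\ref{eqn:stronger-ind-rho}) is the specialization of the abstract definition~(\ref{eqn:generalrho}) under $\cS=\cS_i$, so that Theorem~\ref{theorem:composition} applies without modification.
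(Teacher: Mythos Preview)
Your proposal is correct and follows the same route as the paper: instantiate the abstract space $\cS$ in Theorem~\ref{theorem:composition} so that $\rho_t$ specializes to $\rho_t^{(i)}$, apply the theorem to get $D_\alpha^\leftrightarrow\bigl(\A^{(k)}(S)\,\|\,\A^{(k)}(S^{-i})\bigr)\leq B$, and then let $i$ range over $[n]$. The paper's proof is terser---it simply asserts that $\rho_t^{(i)}$ is an instance of~(\ref{eqn:generalrho}) and quotes Theorem~\ref{theorem:composition}---whereas you spell out the choice of $\cS_i$ and the matching of the two directions explicitly, but the logic is identical.
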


\begin{proof}
By Theorem \ref{theorem:composition}, $\sum_{t=1}^k \rho_t^{(i)} \leq B$ implies that
$$D^\leftrightarrow_\alpha\left(\A^{(k)}(S)\|\A^{(k)}(S^{-i})\right)\leq B.$$
Since this holds for all $S\in \X^n$ and $i\in[n]$, we conclude that $\A^{(k)}$ is $(\alpha,B)$-R\'enyi differentially private.
\end{proof}

Notice that $\sum_{t=1}^k \rho_t^{(i)} \leq B$ is a \emph{data-specific} requirement, while classical composition results consider \emph{all} hypothetical datasets. In Section \ref{ind-filter} we will show how Corollary \ref{corollary:ind-composition} can be operationalized.

It is worth mentioning that Corollary \ref{corollary:ind-composition} also holds under the per-instance notion of individual privacy (Definition \ref{def:general-ind-privacy}). This result is obtained by simply taking $\cS = \{(S,S^{-i})\}$ in the proof, where $S$ is the analyzed dataset. However, our main application of Corollary \ref{corollary:ind-composition}---individual privacy filtering, stated in the following section---requires individual privacy loss accounting according to Definition \ref{def:stronger-ind-privacy}. 

\section{R\'enyi privacy filter}
\label{filter}
Fully adaptive composition was first studied by \citet{rogers2016privacy}. They defined the notion of a \emph{privacy filter}, a function that takes as input adaptively-chosen DP parameters $\epsilon_1,\delta_1,\dots,\epsilon_t,\delta_t$, as well as a global differential privacy budget $\epsilon_g,\delta_g$, and outputs $\cont$ if the overall report after $t$ rounds of adaptive composition with the corresponding privacy parameters is guaranteed to satisfy $(\epsilon_g,\delta_g)$-DP. Otherwise, it outputs $\halt$.

We now show that Theorem \ref{theorem:composition} immediately implies a simple RDP analogue of a privacy filter. Specifically, we show that by simply adding up privacy parameters, as in the usual composition where all privacy parameters are fixed up front, we obtain a filter for RDP. Our individual privacy accounting method can naturally be seen as a privacy filter applied to each data point individually. We formalize this in Section \ref{ind-filter}. Then, in Section \ref{sec:eps-filter}, we show that existing conversions of RDP guarantees to DP guarantees imply a new filter for $(\eps,\delta)$-DP that both simplifies and improves on the results of \citet{rogers2016privacy}.

We now define an RDP filter formally. This general definition is used primarily to explain the relationship of our results to the notions and results in \citep{rogers2016privacy}. Our individual privacy filtering application can be derived from Theorem \ref{theorem:composition} directly.

As in equation \eqref{eqn:generalrho}, we let $\rho_t$ denote the possibly random RDP parameter of order $\alpha$ of $\A_t$, conditional on the past reports. We again assume an implicit space $\cS$ over pairs of datasets, which we instantiate in different ways depending on the privacy accounting method.

\begin{algorithm}[H]
\SetAlgoLined
\SetKwInOut{Input}{input}
\Input{dataset $S \in\X^n$, maximum number of rounds $N\in\N$, sequence of algorithms $\A_k, k=1,2,\dots,N$}
\hspace{0.15cm}Initialize $k=0$\newline
\While{$k< N$}
{
\hspace{0.15cm}Compute $\rho_{k+1} = \sup_{(S_1,S_2)\in\cS} D^\leftrightarrow_\alpha\left(\A_{k+1}(a_1,\dots,a_{k}, S_1)\|\A_{k+1}(a_1,\dots,a_{k}, S_2)\right)$  \newline
\If{$\F_{\alpha, B}\left(\rho_1,\dots,\rho_{k+1} \right)=\halt$}{BREAK}
\hspace{0.15cm}Compute $a_{k+1} = \A_{k+1}(a_1,\dots,a_{k}, S)$\newline
$k\leftarrow k +1$
}
Return $\A^{(k)}(S) = (a_1,\dots,a_{k})$
\caption{Adaptive composition with R\'enyi privacy filtering}
\label{alg:filteredcomp2}
\end{algorithm}

As a remark, although in Algorithm \ref{alg:filteredcomp2} we write ``compute $\rho_{k+1}$'', sometimes the exact computation is infeasible and we actually only require the computed quantity to be an upper bound on the exact divergence. We suppress this distinction for readability. The same remark applies to other algorithm displays in the paper.

Let $S_\infty$ denote the set of all positive, real-valued finite sequences.
\begin{definition}[RDP filter]
\label{def:rdp-filter}
Fix a parameter $\alpha\geq 1$, and privacy budget $B$. We say that $\F_{\alpha, B}:S_\infty \rightarrow \{\cont,\halt\}$ is a valid R\'enyi privacy filter, or RDP filter for short, if for any sequence of algorithms $(\A_k)_{k=1}^N$ and any pair of datasets $(S_1,S_2)$, $\A^{(k)}$ given in Algorithm \ref{alg:filteredcomp2} with $\cS = \{(S_1,S_2)\}$, satisfies
$$D^\leftrightarrow_\alpha\left(\A^{(k)}(S_1)\|\A^{(k)}(S_2)\right)\leq B.$$
\end{definition}

We note that we can assume, without loss of generality, that the filter is monotone, namely that if $\F_{\alpha, B}\left(\rho_1,\dots,\rho_k \right)=\cont$ then $\F_{\alpha, B}\left(\rho'_1,\dots,\rho'_k \right)=\cont$ whenever $\rho'_i\leq \rho_i$ for all $i\in[k]$. This immediately implies that an RDP filter can be applied with $\rho_k$ defined using an arbitrary set of dataset pairs $\cS$ instead of just a single pair of datasets.

\begin{lemma}
Fix a parameter $\alpha\geq 1$, and privacy budget $B$. If $\F_{\alpha, B}:S_\infty \rightarrow \{\cont,\halt\}$ is a valid RDP filter, then for any sequence of algorithms $(\A_k)_{k=1}^N$ and any set of pairs of datasets $\cS$, $\A^{(k)}$ given in Algorithm \ref{alg:filteredcomp2} satisfies
$$\sup_{(S_1,S_2)\in\cS} D^\leftrightarrow_\alpha\left(\A^{(k)}(S_1)\|\A^{(k)}(S_2)\right)\leq B.$$
\end{lemma}

We remark that the analyst might choose an algorithm at time $t$ that exceeds the privacy budget, which will trigger the filter $\F_{\alpha,B}$ to halt. However, the analyst can then decide to change the computation at time $t$ retroactively and query the filter again, which then might allow continuation. This way, one can ensure a sequence of $N$ computations with formal privacy guarantees, for any target number of rounds $N$. In the following subsection, we present an application of RDP filters to individual privacy loss accounting which relies on this reasoning.

\begin{theorem}
\label{theorem:rdp-at-stopping-times}
Let
$$\F_{\alpha, B}(\rho_1,\dots,\rho_k) = \begin{cases}
\cont, \text{ if } \sum_{t=1}^k \rho_t \leq B, \\
\halt, \text{ if } \sum_{t=1}^k \rho_t > B.
\end{cases}$$
Then, $\F_{\alpha, B}$ is a valid R\'enyi privacy filter.
\end{theorem}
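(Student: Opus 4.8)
The plan is to deduce Theorem~\ref{theorem:rdp-at-stopping-times} from Theorem~\ref{theorem:composition} by a stopping-time argument. First I would pin down what Algorithm~\ref{alg:filteredcomp2} actually computes with this particular $\F_{\alpha,B}$: since each $\rho_t\ge 0$, the partial sums $\sum_{s\le j}\rho_s$ are nondecreasing, so the loop executes exactly $\A_1,\dots,\A_k$ with $k=\min\{N,\tau-1\}$, where $\tau\defeq\min\{j\ge 1:\sum_{s=1}^{j}\rho_s>B\}$ (and $\tau=\infty$ if the sum never exceeds $B$). The crucial observation is that $k$ is a bounded stopping time for the filtration $\F_t=\sigma(a^{(t)})$ used in the proof of Theorem~\ref{theorem:composition}: by \eqref{eqn:generalrho} each $\rho_s$ is a deterministic function of $a^{(s-1)}$, hence $\F_{s-1}$-measurable, so the event $\{k\le t\}=\{N\le t\}\cup\{\sum_{s=1}^{t+1}\rho_s>B\}$ lies in $\F_t$, and trivially $k\le N$.

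Next I would fix any $(S,S')\in\cS$ and reuse the supermartingale from the proof of Theorem~\ref{theorem:composition}: with $a^{(t)}\sim\A^{(t)}(S')$, the process $M_t\defeq\loss^{(t)}(a^{(t)};S,S',\alpha)\,e^{-(\alpha-1)\sum_{j=1}^{t}\rho_j}$, $M_0=1$, is a nonnegative $\F_t$-supermartingale --- that argument only used $\F_{t-1}$-measurability of $\rho_1,\dots,\rho_t$ and the definition \eqref{eqn:generalrho} of $\rho_t$, never that the $\rho_j$ were fixed in advance. Applying the optional stopping theorem at the bounded stopping time $k$ then gives $\E[M_k]\le\E[M_0]=1$.

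It remains to convert $\E[M_k]\le 1$ into the RDP bound. By construction the offending round $\tau$ is never executed, so $k\le\tau-1$ and hence $\sum_{j=1}^{k}\rho_j\le B$ surely; since $\alpha\ge 1$ this yields $\loss^{(k)}(a^{(k)};S,S',\alpha)\le M_k\,e^{(\alpha-1)B}$ pointwise and therefore $\E_{a^{(k)}\sim\A^{(k)}(S')}\!\big[\loss^{(k)}(a^{(k)};S,S',\alpha)\big]\le e^{(\alpha-1)B}$. To recognize the left side as $\exp\!\big((\alpha-1)D_\alpha(\A^{(k)}(S)\|\A^{(k)}(S'))\big)$ --- the R\'enyi divergence between the actual, possibly-variable-length transcripts --- I would note that ``the filter halts exactly after round $j$'' is a deterministic function of $a^{(j)}$, so the mass (or density) that $\A^{(k)}(S)$ places on a length-$j$ transcript $a^{(j)}$ equals $\Prob{\A^{(j)}(S)=a^{(j)}}$ restricted to that event, and the same holds under $S'$; hence the likelihood ratios are exactly those of $\A^{(j)}$, and summing over all transcripts reproduces $\E_{a^{(k)}\sim\A^{(k)}(S')}[\loss^{(k)}]$. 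Thus $D_\alpha(\A^{(k)}(S)\|\A^{(k)}(S'))\le B$ for all $(S,S')\in\cS$; taking $\cS$ to contain the pairs $(S,S^{-i})$ and $(S^{-i},S)$ gives $D^\leftrightarrow_\alpha(\A^{(k)}(S)\|\A^{(k)}(S^{-i}))\le B$ for every $S$ and $i\in[n]$, i.e.\ $\F_{\alpha,B}$ is a valid RDP filter.

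The step that needs the most care --- and the only conceptual content beyond Theorem~\ref{theorem:composition} --- is the passage to the random horizon $k$: one must verify that $k$ is genuinely a stopping time (which is precisely where the $\F_t$-measurability of $\rho_{t+1}$ enters) and that the stopped supermartingale still encodes a privacy statement about the \emph{realized} transcript. The decisive structural fact making this work is that the filter refuses to run the analysis that would push the cumulative parameter above $B$, so on every executed prefix the cumulative parameter is at most $B$ surely, which is exactly what collapses $\E[M_k]\le 1$ to $D_\alpha\le B$. If one prefers to avoid reasoning about variable-length outputs, an equivalent route is to pad the composition to exactly $N$ rounds by replacing $\A_t$ for $t\ge\tau$ with a constant algorithm emitting a fixed out-of-range symbol (conditional RDP parameter $0$), apply Theorem~\ref{theorem:composition} directly to this length-$N$ composition, whose cumulative parameter is $\sum_{t<\tau}\rho_t\le B$ surely, and then recover $\A^{(k)}(S)$ from it by truncation at the first special symbol, invoking the data-processing inequality for $D_\alpha$.
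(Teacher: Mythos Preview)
Your proposal is correct and follows essentially the same route as the paper: identify the halting round as a bounded stopping time for the filtration $\F_t=\sigma(a^{(t)})$ (using that $\rho_{t+1}\in\F_t$), apply optional stopping to the same supermartingale $M_t$ from Theorem~\ref{theorem:composition}, and use $\sum_{j\le k}\rho_j\le B$ a.s.\ to pass from $\E[M_k]\le 1$ to the R\'enyi bound. You are in fact more explicit than the paper about why the stopped expectation equals the R\'enyi divergence of the variable-length transcript, and your padding alternative is a nice extra.
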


\begin{proof}
The only difference between Theorem \ref{theorem:composition} and this theorem is that a privacy filter halts at a random round, meaning the length of the output is random rather than fixed. Therefore, in this proof we formalize the fact that Theorem \ref{theorem:composition} is valid even under adaptive stopping.

Fix any $(S_1,S_2)$. By the argument in Theorem \ref{theorem:composition}, $M_t \defeq \loss^{(t)}(a^{(t)};S_1,S_2,\alpha) e^{-(\alpha-1)\sum_{j=1}^t \rho_j}$, where $a^{(t)}\sim \A^{(t)}(S_2)$, is a supermartingale with respect to $\Gamma_t = \sigma(a^{(t)})$.

Let $T$ be the last round before Algorithm \ref{alg:filteredcomp2} halts; that is,
$$T=\min\left\{t:\F_{\alpha,B}(\rho_1,\dots,\rho_{t+1}) = \halt\right\}\wedge N.$$

Note that $T$ is a stopping time with respect to $\Gamma_t$, that is $\{T=t\}\in\Gamma_t$, due to the fact that $\rho_{t+1}\in\Gamma_t$. Since $T$ is almost surely bounded by construction, we can apply the optional stopping theorem for supermartingales to get
$$\E[M_T] = \E_{a^{(T)}\sim \A^{(T)}(S_2)}\left[\loss^{(T)}\left(a^{(T)};S_1,S_2,\alpha\right) e^{-(\alpha-1)\sum_{j=1}^T \rho_j}\right] \leq \E[M_0] = 1.$$
By definition of the RDP filter, we know that $\sum_{j=1}^T \rho_j \leq B$ almost surely; otherwise the filter would have halted earlier. Thus, we can conclude
$$\E_{a^{(T)}\sim \A^{(T)}(S_2)}\left[\loss^{(T)}\left(a^{(T)};S_1,S_2,\alpha\right) e^{-(\alpha-1)B} \right]\leq 1.$$
After rearranging and normalizing, this implies
$$D_\alpha\left(\A^{(T)}(S_1),\A^{(T)}(S_2)\right) = \frac{1}{\alpha -1}\log~\E_{a^{(T)}\sim \A^{(T)}(S_2)}\left[\loss^{(T)}\left(a^{(T)};S_1,S_2,\alpha\right) \right]\leq B.$$
The same argument can be used to bound the other direction of the divergence. Therefore, $\F_{\alpha,B}$ is a valid RDP filter.
\end{proof}

\begin{remark}
For algorithms that satisfy zCDP, the stopping rule of Theorem \ref{theorem:rdp-at-stopping-times} suffices for controlling the overall zCDP privacy loss as well. Namely, if  $\A_t(a_1,\ldots,a_{t-1},\cdot)$ is $\kappa_t$-zCDP, then the halting criterion of the R\'enyi privacy filter with parameters $(\alpha,\alpha \kappa)$ is $\sum_{t=1}^k \alpha \kappa_t \leq \alpha \kappa$, which simplifies to $\sum_{t=1}^k \kappa_t \leq \kappa$. Since this stopping rule is independent of $\alpha$, we conclude that the overall output is $(\alpha,\alpha \kappa)$-RDP for all $\alpha\geq 1$, which is equivalent to $\kappa$-zCDP. More generally, Theorem \ref{theorem:rdp-at-stopping-times} extends to tracking R\'enyi privacy loss for any set of orders $\{\alpha_j\}_{j\in \mathcal{I}}$: if all filters in the set $\{\F_{\alpha_j,B}\}_{j\in\mathcal{I}}$ output $\cont$, then the output of adaptive composition satisfies $(\alpha_j,B)$-RDP, $\forall j\in\mathcal{I}$.
\end{remark}

We also remark that Rogers et al. define a privacy filter somewhat more generally, by treating the analyst as an adversary who is allowed to pick ``bad'' neighboring datasets at every step (see Algorithm 2 in \cite{rogers2016privacy}). Theorem \ref{theorem:rdp-at-stopping-times} holds under this setting as well, however we opted for a simpler presentation.

Just like Corollary \ref{corollary:ind-composition} applies Theorem \ref{theorem:composition} in the individual privacy setting, we can apply R\'enyi privacy filters to individual RDP parameters, in which case the filter indicates whether the privacy loss of a specific individual is potentially violated.

\subsection{Individual privacy accounting via a privacy filter}
\label{ind-filter}
Now we design an \emph{individual privacy filter}, which monitors individual privacy loss estimates across all individuals and all computations, and ensures that the privacy of all individuals is preserved. The filter guarantees privacy by adaptively dropping data points once their cumulative individual privacy loss estimate is about to cross a pre-specified budget. More specifically, at every step of adaptive composition $t$, it determines an active set of points $S_t\subseteq S$ based on cumulative estimated individual losses, and applies $\A_t$ only to $S_t$.



\begin{algorithm}[H]
\SetAlgoLined
\SetKwInOut{Input}{input}
\Input{dataset $S \in\X^n$, sequence of algorithms $\A_t$, $t=1,2,\dots,k$}
\For{$t = 1,\dots,k$}{
\hspace{0.1cm} For all $X_i\in S$, compute $\rho_t^{(i)} = \sup_{(S_1,S_2)\in\cS(X_i,n)} D_\alpha^\leftrightarrow \left(\A_t(a_1,\dots,a_{t-1},S_1)\|\A_t(a_1,\dots,a_{t-1},S_2)\right)$\newline
Determine active set $S_t = \left(X_i~:~\F_{\alpha,B}(\rho_1^{(i)},\dots,\rho_t^{(i)}) = \cont\right)$\newline
For all $X_i\in S$, set $\rho_t^{(i)} \leftarrow \rho_t^{(i)}\one\{X_i \in S_t\}$\newline
Compute $a_t = \A_t(a_1,\dots,a_{t-1},S_t)$
}
Return $(a_1,\dots,a_k)$
\caption{Adaptive composition with individual privacy filtering}
\label{alg:filteredcomp}
\end{algorithm}

Here, $\F_{\alpha,B}$ is the R\'enyi privacy filter from Theorem \ref{theorem:rdp-at-stopping-times}. Given its validity, one can observe that Algorithm \ref{alg:filteredcomp} preserves R\'enyi differential privacy.

\begin{theorem}
\label{theorem:ind-filter-via-rdp-filter}
Adaptive composition with individual privacy filtering (Algorithm \ref{alg:filteredcomp}) satisfies $(\alpha,B)$-R\'enyi differential privacy.
\end{theorem}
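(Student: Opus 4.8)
The plan is to reduce Theorem~\ref{theorem:ind-filter-via-rdp-filter} to the validity of the R\'enyi privacy filter $\F_{\alpha,B}$ established in Theorem~\ref{theorem:rdp-at-stopping-times}, applied separately to each individual $X_i\in S$. Fix an arbitrary dataset $S=(X_1,\dots,X_n)$ and an index $i\in[n]$; it suffices to show $D^\leftrightarrow_\alpha\big(\A^{(k)}(S)\|\A^{(k)}(S^{-i})\big)\leq B$, where $\A^{(k)}$ here denotes the map computed by Algorithm~\ref{alg:filteredcomp}. The key observation is that from the point of view of coordinate $i$, Algorithm~\ref{alg:filteredcomp} behaves exactly like Algorithm~\ref{alg:filteredcomp2} (adaptive composition with R\'enyi filtering) run on the sequence of individual privacy parameters $\rho_1^{(i)},\rho_2^{(i)},\dots$: at round $t$, the point $X_i$ is in the active set $S_t$ iff $\F_{\alpha,B}(\rho_1^{(i)},\dots,\rho_t^{(i)})=\cont$, and once $X_i$ is dropped its contribution to the divergence is zero (the truncation $\rho_t^{(i)}\leftarrow\rho_t^{(i)}\one\{X_i\in S_t\}$ makes this formal). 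So the effective per-round RDP parameter with respect to the pair $(S,S^{-i})$ is $\rho_t^{(i)}\one\{X_i\in S_t\}$, and by construction of the filter the running sum of these truncated parameters never exceeds $B$.

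Concretely, I would proceed as follows. First, instantiate the abstract space $\cS$ from Section~\ref{composition} as $\cS=\{(S,S^{-i}),(S^{-i},S)\}$, so that the general RDP parameter $\rho_t$ of equation~\eqref{eqn:generalrho} specializes to exactly $\rho_t^{(i)}$ of equation~\eqref{eqn:stronger-ind-rho} — but computed for the algorithm $\A_t(\cdot,\cdot,S_t)$ that Algorithm~\ref{alg:filteredcomp} actually runs, i.e. applied only to the active set. Second, observe that when $X_i\notin S_t$ we have $S_t\subseteq S^{-i}$, hence $\A_t(a_1,\dots,a_{t-1},S_t)$ has an identical distribution whether the underlying dataset is $S$ or $S^{-i}$, so the conditional loss $\loss_t$ equals $1$ and the corresponding RDP parameter is $0$; this matches the truncation step in the algorithm. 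Third, note that the membership decision ``$X_i\in S_t$'' depends only on $\rho_1^{(i)},\dots,\rho_t^{(i)}$, which are $\F_{t-1}$-measurable (functions of $a^{(t-1)}$), so the stopping behavior is measurable with respect to the right filtration — exactly the hypothesis needed to invoke Theorem~\ref{theorem:rdp-at-stopping-times}. Fourth, apply Theorem~\ref{theorem:rdp-at-stopping-times} with $\cS=\{(S,S^{-i}),(S^{-i},S)\}$ to the sequence $(\rho_t^{(i)}\one\{X_i\in S_t\})_t$ (whose partial sums are $\leq B$ by the filter's definition) to conclude $D_\alpha\big(\A^{(k)}(S)\|\A^{(k)}(S^{-i})\big)\leq B$ and, by the symmetry of $\cS$, also the reverse direction, giving $D^\leftrightarrow_\alpha\leq B$. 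Since $i$ and $S$ were arbitrary, this is precisely $(\alpha,B)$-RDP.

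The main subtlety — and the step I expect to need the most care — is the bookkeeping that identifies the "effective" per-round parameter of Algorithm~\ref{alg:filteredcomp} with a valid instance of the filtered-composition setup of Theorem~\ref{theorem:rdp-at-stopping-times}. One has to be careful that Theorem~\ref{theorem:rdp-at-stopping-times} is stated for a filter halting a global computation, whereas here the filter halts only the participation of a single point while the overall composition continues; the resolution is that the quantity we track, $D^\leftrightarrow_\alpha$ between $\A^{(k)}(S)$ and $\A^{(k)}(S^{-i})$, only ever receives contributions from rounds in which $X_i$ is active, so after round $T_i\defeq\min\{t:\F_{\alpha,B}(\rho_1^{(i)},\dots,\rho_{t+1}^{(i)})=\halt\}\wedge k$ the two compositions evolve by applying identical (conditional) distributions, and the supermartingale $M_t$ from the proof of Theorem~\ref{theorem:composition} becomes constant past $T_i$. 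One should also double-check that computing $\rho_t^{(i)}$ via \eqref{eqn:stronger-ind-rho} for the algorithm restricted to $S_t$ is legitimate as a supremum over $S\supseteq X_i$ with $|S|\le n$ — this is where Definition~\ref{def:stronger-ind-privacy}, rather than the per-instance Definition~\ref{def:general-ind-privacy}, is genuinely needed, because the active set $S_t$ depends on the (random) history and hence cannot be pinned to a single fixed dataset in advance. Modulo this measurability/bookkeeping care, the proof is a direct corollary of Theorem~\ref{theorem:rdp-at-stopping-times} applied $n$ times.
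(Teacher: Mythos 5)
Your proposal is correct and takes essentially the same route as the paper: both reduce the theorem to the validity of the R\'enyi filter $\F_{\alpha,B}$ applied per individual, observe that an inactive point contributes zero conditional loss (so its truncated parameter is $0$ and the filter never halts), and bound the loss in active rounds by the supremum in Definition~\ref{def:stronger-ind-privacy}. The one step the paper makes more explicit — and which your claim ``$X_i\notin S_t$ implies $S_t\subseteq S^{-i}$, hence identical distributions'' implicitly relies on — is that, conditional on $a^{(t-1)}$, the active sets under inputs $S$ and $S^{-i}$ satisfy $S_t^{-i}=S_t\setminus X_i$, because each point's membership is determined solely by its own parameters $(\rho_j^{(\cdot)})$ (which, thanks to the supremum over datasets in Definition~\ref{def:stronger-ind-privacy}, do not depend on whether $X_i$ is present).
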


\begin{proof}
Denote by $\Afilt_t$ the subroutine given by the $t$-th step of the individual filtering algorithm; that is, $a_t = \Afilt_t(a_1,\dots,a_{t-1},S)$. Note that $\Afilt_t$ is \emph{not} equal to $\A_t$. By analogy with the notation $\A^{(t)}$, we also let $\A^{\text{filt}(t)}(\cdot)\defeq (\Afilt_1(\cdot),\Afilt_2(\Afilt_1(\cdot),\cdot),\dots,\Afilt_t(\Afilt_1(\cdot),\dots,\cdot))$.

We argue that the privacy loss of point $X_i$ in round $t$, conditional on the past reports, is upper bounded by $\rho_t^{(i)}$ (after $\rho_t^{(i)}$ has been updated):
\begin{equation*}
\label{eqn:condition}
    \frac{1}{\alpha-1}\log \max\left\{\E_{a^{(t)}\sim \A^{\text{filt}(t)}(S^{-i})} \left[\lossfilt_t(a^{(t)};S,S^{-i},\alpha)~\Big|~a^{(t-1)}\right], \E_{a^{(t)}\sim \A^{\text{filt}(t)}(S)} \left[\lossfilt_t(a^{(t)};S^{-i},S,\alpha)~\Big|~a^{(t-1)}\right]\right\} \leq \rho_t^{(i)},
\end{equation*}
where $\lossfilt_t(a^{(t)};S,S',\alpha) = \left(\frac{\Prob{\Afilt_t(a_1,\dots,a_{t-1},S) = a_t}}{\Prob{\Afilt_t(a_1,\dots,a_{t-1},S') = a_t}}\right)^\alpha.$

To do so, we reason about the active set of points at time $t$ when the input to adaptive composition is $S$, and when the input is $S^{-i}$. Denote by $S_t$ the active set given input $S$, and by $S_t^{(i)}$ the active set given input $S^{-i}$. Observe that, conditional on $a_1,\dots,a_{t-1}$, we have $(S_t,S_t^{(i)})\in \cS(X_i,n)$; that is, $S_t$ and $S_t^{(i)}$ differ only in the presence on $X_i$. This follows because the sequence $(\rho_j^{(i)})_{j=1}^t$ is measurable with respect to $a_1,\dots,a_{t-1}$, and whether point $X_i$ is active at time $t$ is in turn determined based only on $(\rho_j^{(i)})_{j=1}^t$. In particular, whether any given point is active does not depend on the rest of the input dataset ($S$ or $S^{-i}$), given $a_1,\dots,a_{t-1}$. Therefore, if $X_i\not\in S_t$, then $X_i$ loses no privacy in round $t$, because $\Afilt_t(a_1,\dots,a_{t-1}, S) \stackrel{d}{=} \Afilt_t(a_1,\dots,a_{t-1}, S^{-i})$, conditional on $a_1,\dots,a_{t-1}$. On the other hand, if $X_i\in S_t$, then its privacy loss can be bounded as
\begin{align*}
    &\frac{1}{\alpha-1}\log \max\left\{\E_{a^{(t)}\sim \A^{\text{filt}(t)}(S^{-i})} \left[\lossfilt_t(a^{(t)};S,S^{-i},\alpha)~\Big|~a^{(t-1)}\right], \E_{a^{(t)}\sim \A^{\text{filt}(t)}(S)} \left[\lossfilt_t(a^{(t)};S^{-i},S,\alpha)~\Big|~a^{(t-1)}\right]\right\}\\
    \leq~ &\frac{1}{\alpha-1}\log \max\left\{\E_{a^{(t)}\sim \A^{\text{filt}(t)}(S^{-i})} \left[\loss_t(a^{(t)};S_t,S_t^{(i)},\alpha)~\Big|~a^{(t-1)}\right], \E_{a^{(t)}\sim \A^{\text{filt}(t)}(S)} \left[\loss_t(a^{(t)};S_t^{(i)},S_t,\alpha)~\Big|~a^{(t-1)}\right]\right\}\\
    \leq~ &\frac{1}{\alpha-1}\log\sup_{(S_1,S_2)\in \cS(X_i,n)} \max\left\{\E_{a^{(t)}\sim \A^{(t)}(S_2)} \left[\loss_t(a^{(t)};S_1,S_2,\alpha)~\Big|~a^{(t-1)}\right], \E_{a^{(t)}\sim \A^{(t)}(S_1)} \left[\loss_t(a^{(t)};S_2,S_1,\alpha)~\Big|~a^{(t-1)}\right]\right\}\\
\leq~ &\rho_t^{(i)}.
\end{align*}
With this, we have showed that $\rho_t^{(i)}$ is a valid estimate of the privacy loss of $X_i$, for all $i\in[n]$.

Now we argue that, at the end of every round $t$ (after $\rho_t^{(i)}$ has been updated), $\F_{\alpha,B}(\rho_1^{(i)},\dots,\rho_t^{(i)}) = \cont$ for all $i\in[n]$. This follows by induction. For $t=1$, this is clearly true because $\F_{\alpha,B}(0) = \cont$. Now assume it is true at time $t-1$. Then, at time $t$, the filter clearly continues for all $X_i\in S_t$ simply by definition of $S_t$. If $X_i\not\in S_t$, then $\F_{\alpha,B}(\rho_1^{(i)},\dots,\rho_t^{(i)}) = \F_{\alpha,B}(\rho_1^{(i)},\dots,\rho_{t-1}^{(i)},0)= \F_{\alpha,B}(\rho_1^{(i)},\dots,\rho_{t-1}^{(i)})=\cont$. Therefore, we conclude that at the end of every round $t\in[k]$ and for all $i\in[n]$, the filter would output $\cont$. By the validity of $\F_{\alpha,B}$, we know that $\F_{\alpha,B}(\rho_1^{(i)},\dots,\rho_{t}^{(i)}) = \cont$ implies
$$D^\leftrightarrow_\alpha\left(\A^{(t)}(S)\|\A^{(t)}(S^{-i})\right)\leq B,$$
and since this holds for all $S$ and all $i\in[n]$, we conclude that Algorithm \ref{alg:filteredcomp} is $(\alpha,B)$-RDP.
\end{proof}

We can now justify the use of a supremum over all datasets that include $X$ in Definition \ref{def:stronger-ind-privacy} and, in particular, why the per-instance notion in Definition \ref{def:general-ind-privacy} does not suffice. By the current design of Algorithm \ref{alg:filteredcomp}, $X_i \not\in S_t$ implies no privacy loss for point $X_i$. This is true because, conditional on $a_1,\dots,a_{t-1}$, $X_i$ being inactive ensures that $S_t$ would be the same regardless of whether the input to Algorithm \ref{alg:filteredcomp} is $S$ or $S^{-i}$. Consequently, the output at time $t$ would be insensitive to the value of $X_i$. Under the more fine-grained definition of individual privacy, \emph{even if} $X_i \not\in S_t$, its privacy could still leak at round $t$. The reason is that, under two different inputs $S$ and $S^{-i}$, the running privacy loss estimates for all points are different, and hence the active set $S_t$ in the two hypothetical scenarios could be different as well. This fact, in turn, implies two different distributions over reports $a_t$. In short, dropping $X_i$ from the analysis \emph{does not} prevent its further privacy leakage if accounting is done according to Definition \ref{def:general-ind-privacy}.

\begin{remark}
Algorithm \ref{alg:filteredcomp} can naturally be applied to privacy accounting in the local differential privacy model. Here, each user would have a local implementation of the individual privacy filter and would stop responding when the filter halts. This is possible because the decision to halt for any given data point does not depend on the other data points other than through the sequence of reports $a_1,\dots,a_t$.
\end{remark}

In Section \ref{applications}, we apply the individual privacy filter to differentially private optimization via gradient descent, and demonstrate how this object ensures utilization of data points as long as their \emph{realized} gradients have low norm.

\subsection{Answering linear queries}
\label{sec:adaptive-linear-queries}
To illustrate the gains of individual privacy, we consider the task of answering adaptively-chosen high-dimensional linear queries. We aim to design an algorithm that receives a sequence of queries $q_1,q_2,\dots$, where $q_t:\X\rightarrow \R^d$ for all $t\in\N$, and upon receiving $q_t$ provides an estimate $a_t$ of $\sum_{i=1}^n q_t(X_i)$, where  $S=(X_1,\dots,X_n)\in \X^n$. In some applications it is natural to expect that, for a typical user, many of the queries evaluate to a very small value (having norm close to zero). For example, in the context of continual monitoring \cite{dwork2010differential, erlingsson2019amplification}, a platform might collect one real-valued indicator per user per day, and wish to make decisions based off the daily averages of these indicators across users. Here, $X_i$ would be a single user, and $q_t(X_i) = X_{t}^{(i)}$ would be the corresponding user's indicator on day $t$. For example, $X_t^{(i)}\in\{0,1\}$ could be a binary indicator of a change of some state for user $i$ on day $t$. For simplicity, we will treat the dataset $S$ as fixed, but our results apply to a more general setting in which the users' data can be updated after each query; for example, additional points might arrive in the process of the analysis (see, e.g., \cite{cummings2018differential}).



A prototypical mechanism for answering linear queries is the Gaussian mechanism, which reports $a_t=\sum_{i=1}^n q_t(X_i) + \xi_t$, where $\xi_t\sim N(0,\sigma^2\I_d)$. If the range of $q_t$ is constrained (or clipped) to have norm at most $C$, then the worst-case RDP loss incurred by answering $q_t$ is $(\alpha,\rho_t) = (\alpha,  \alpha C^2/(2\sigma^2))$. 
This implies that the standard analysis---which only considers $\rho_t$---allows answering at most $k_0= \lfloor 2B\sigma^2/(C^2\alpha)\rfloor$ queries in order to ensure $(\alpha,B)$-RDP. As we mentioned in Example~\ref{ex:linear-query}, the Gaussian mechanism satisfies $(\alpha,\rho_t^{(i)})$-individual RDP for $X_i$, where $\rho_t^{(i)} = \alpha \|q_t(X_i)\|_2^2/(2\sigma^2)$. Thus the individual privacy filter allows us to provide accurate answers to $q_t$ as long as each user's responses are ``sparse'' (more generally, have small $\sum_{j=1}^t \|q_j(X_i)\|_2^2$). Formally, we obtain the following generalization of Theorem~\ref{theorem:adaptive_queries_main}.

\begin{corollary}
\label{corollary:adaptive_queries}
There exists an algorithm $\A$ that, given a norm budget $\bnorm$ and privacy level $\kappa$, for any adaptively-chosen sequence of queries $q_1,\ldots,q_k$ of arbitrary length $k$, where $q_i \colon \X \to \R^d$, provides a sequence of answers $a_1,\ldots,a_k$ such that: $(1)$ $\A$ is $\kappa$-zCDP, that is, $(\alpha,\alpha\kappa)$-RDP for all $\alpha \geq 1$; $(2)$ for all $t$ and any $\delta \in (0,1)$, the probability that $\|a_t - \sum_{X_i\in S_t} q_t(X_i)\|_\infty > \sqrt{\bnorm \log(d/\delta)/\kappa}$ is at most $\delta$, where $S_t = (X_i\in S : \sum_{j=1}^t \|q_j(X_i)\|_2^2 \leq \bnorm )$.
\end{corollary}
Corollary \ref{corollary:adaptive_queries} follows from Theorem \ref{theorem:ind-filter-via-rdp-filter}, by setting each $\A_t$ to be the Gaussian mechanism with $\sigma^2 = \bnorm/(2\kappa)$. Note that, due to $\rho_t^{(i)}\leq \rho_t$, \emph{all} points are active in $S_t$ for at least the first $k_0$ computations, as prescribed by the usual worst-case analysis, and during those $k_0$ steps the answers are guaranteed to be accurate. Therefore, individual privacy provides a more fine-grained way of quantifying privacy loss by taking into account the value of the point whose loss we aim to measure. While $k$ is technically allowed to be arbitrarily large, after a certain number of reports we expect few points to remain active; we discuss stopping criteria in the following section.

\subsection{\texorpdfstring{$(\epsilon,\delta)$}{epsdelta}-differential privacy filter via R\'enyi filter}
\label{sec:eps-filter}

By connections between R\'enyi differential privacy and approximate differential privacy \cite{bun2016concentrated, mironov2017renyi}, we can translate our R\'enyi privacy filter into a filter for approximate differential privacy.

We define a valid DP filter analogously to a valid RDP filter, the difference being that it takes as input DP, rather than RDP parameters, and that it is parameterized by a global DP budget $\epsilon_g\geq 0,\delta_g\in(0,1)$. We denote by $\epsilon_t$ the possibly adaptive differential privacy parameter of $\A_t$:
\begin{align*}
    \epsilon_t &\defeq \sup_{(S_1,S_2)\in\cS} D^\leftrightarrow_\infty\left(\A_t(a_1,\dots,a_{t-1},S_1)\|\A_t(a_1,\dots,a_{t-1},S_2)\right)\\
    &= \sup_{(S_1,S_2)\in\cS} \max\left\{ D_\infty\left(\A_t(a_1,\dots,a_{t-1},S_1)\|\A_t(a_1,\dots,a_{t-1},S_2)\right), D_\infty\left(\A_t(a_1,\dots,a_{t-1},S_2)\|\A_t(a_1,\dots,a_{t-1},S_1)\right)\right\}.
\end{align*}
Here, $D_\infty$ denotes the max-divergence, obtained as the limit of R\'enyi divergence of order $\alpha$ by taking $\alpha\rightarrow\infty$.

We focus on advanced composition of \emph{pure} differentially private algorithms $\A_t$. As shown in \cite{rogers2016privacy}, a DP filter for approximately differentially private algorithms $\A_t$ can be obtained by an immediate extension of a filter for pure DP algorithms.

\begin{algorithm}[H]
\SetAlgoLined
\SetKwInOut{Input}{input}
\Input{dataset $S \in\X^n$, maximum number of rounds $N\in\N$, sequence of algorithms $\A_k, k=1,2,\dots,N$}
\hspace{0.15cm}Initialize $k=0$\newline
\While{$k< N$}
{
\hspace{0.15cm}Compute $\epsilon_{k+1} = \sup_{(S_1,S_2)\in\cS} D_\infty^\leftrightarrow\left(\A_{k+1}(a_1,\dots,a_{k}, S_1)\|\A_{k+1}(a_1,\dots,a_{k}, S_2)\right)$  \newline
\If{$\G_{\epsilon_g, \delta_g}\left(\epsilon_1,\dots,\epsilon_{k+1} \right)=\halt$}{BREAK}
\hspace{0.15cm}Compute $a_{k+1} = \A_{k+1}(a_1,\dots,a_{k}, S)$\newline
$k\leftarrow k +1$
}
Return $\A^{(k)}(S) = (a_1,\dots,a_{k})$
\caption{Adaptive composition with differential privacy filtering}
\label{alg:filteredcomp-dp}
\end{algorithm}

As before, $S_\infty$ denotes the set of all positive, real-valued finite sequences.

\begin{definition}[DP filter]
\label{def:dp-filter}
Fix $\epsilon_g\geq 0,\delta_g\in(0,1)$. We say that $\G_{\epsilon_g, \delta_g}:S_\infty \rightarrow \{\cont,\halt\}$ is a valid differential privacy filter, or DP filter for short, if for any sequence of algorithms $(\A_k)_{k=1}^N$ and any pair of datasets $(S_1,S_2)$, $\A^{(k)}$ given in Algorithm \ref{alg:filteredcomp-dp} with $\cS = \{(S_1,S_2)\}$, satisfies
$$\Prob{\A^{(k)}(S_1)\in E}\leq e^{\epsilon_g} \Prob{\A^{(k)}(S_2)\in E} +\delta_g\text{ and }\Prob{\A^{(k)}(S_2)\in E}\leq e^{\epsilon_g} \Prob{\A^{(k)}(S_1)\in E} +\delta_g,$$
for all measurable events $E$.
\end{definition}

We note that \citet{rogers2016privacy} define a differential privacy filter somewhat differently---for example, in their definition a filter admits a sequence of privacy parameters of fixed length---however Definition \ref{def:dp-filter} is essentially equivalent to theirs.

By invoking standard conversions between DP and R\'enyi-divergence-based privacy notions, our analysis implies a simple stopping condition for a DP filter, in terms of any zero-concentrated differential privacy (zCDP) level which ensures $(\epsilon_g,\delta_g)$-DP. For clarity, we give one particularly simple such translation from zCDP to DP, however one could in principle invoke more sophisticated analyses such as those of \citet{bun2016concentrated}. In general, we can reproduce any rate for advanced composition of DP that utilizes R\'enyi-divergence-based privacy definitions, in the setting of fully adaptive composition.

\begin{theorem}
\label{theorem:dp-filter-general}
Let $B^\star$ be the largest $B>0$ such that $B$-zero-concentrated differential privacy (zCDP) implies $(\epsilon_g,\delta_g)$-differential privacy. Let
$$\G_{\epsilon_g,\delta_g}(\epsilon_1,\dots,\epsilon_k) = \begin{cases}
\cont, \text{ if } \frac{1}{2}\sum_{t=1}^k \epsilon_t^2 \leq B^\star, \\
\halt, \text{ if } \frac{1}{2}\sum_{t=1}^k \epsilon_t^2 > B^\star.
\end{cases}$$
Then, $\G_{\epsilon_g,\delta_g}$ is a valid DP filter. For example,
$$\G_{\epsilon_g,\delta_g}(\epsilon_1,\dots,\epsilon_k) = \begin{cases}
\cont, \text{ if } \frac{1}{2}\sum_{t=1}^k \epsilon_t^2 \leq \left(-\sqrt{\log(1/\delta_g)} + \sqrt{\log(1/\delta_g) + \epsilon_g}\right)^2, \\
\halt, \text{ if } \frac{1}{2}\sum_{t=1}^k \epsilon_t^2 > \left(-\sqrt{\log(1/\delta_g)} + \sqrt{\log(1/\delta_g) + \epsilon_g}\right)^2
\end{cases}$$
is a valid DP filter.
\end{theorem}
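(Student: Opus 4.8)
The plan is to reduce to the R\'enyi filter of Theorem~\ref{theorem:rdp-at-stopping-times}, one R\'enyi order at a time. The bridge is the standard fact that a pure $\epsilon$-DP mechanism is $\frac{1}{2}\epsilon^2$-zCDP \cite{bun2016concentrated}, i.e.\ it is $(\alpha,\frac{1}{2}\epsilon^2\alpha)$-RDP \emph{simultaneously} for every order $\alpha>1$. Applying this conditionally on the past reports, the conditional order-$\alpha$ RDP parameter $\rho_t$ of $\A_t(a_1,\dots,a_{t-1},\cdot)$ --- the quantity in \eqref{eqn:generalrho} with $\cS$ the usual neighboring structure (all pairs of datasets differing in the presence of one element) --- satisfies $\rho_t\le \frac{1}{2}\epsilon_t^2\alpha$ almost surely. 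Consequently the DP-filter continuation rule $\frac{1}{2}\sum_{t=1}^k\epsilon_t^2\le B^\star$ implies $\sum_{t=1}^k\rho_t\le B^\star\alpha$, and this holds for \emph{every} $\alpha$ at once, because $\alpha$ appears on both sides and cancels.

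Fix an order $\alpha>1$ and a pair $(S,S^{-i})$. Let $T$ be the number of rounds returned by Algorithm~\ref{alg:filteredcomp-dp}. Since each $\A_{t+1}$ is fixed in advance, $\epsilon_{t+1}$ is a deterministic function of $a^{(t)}$, so $\{T=t\}\in\F_t\defeq\sigma(a^{(t)})$; thus $T$ is a stopping time, and it is bounded by $N$. By construction of $\G_{\epsilon_g,\delta_g}$ we have $\frac{1}{2}\sum_{t=1}^T\epsilon_t^2\le B^\star$ almost surely, hence $\sum_{t=1}^T\rho_t\le B^\star\alpha$ almost surely by the previous paragraph. Now run the supermartingale argument from the proofs of Theorems~\ref{theorem:composition} and~\ref{theorem:rdp-at-stopping-times}: with $a^{(t)}\sim\A^{(t)}(S^{-i})$, the process $M_t=\loss^{(t)}(a^{(t)};S,S^{-i},\alpha)\,e^{-(\alpha-1)\sum_{j\le t}\rho_j}$ is a nonnegative supermartingale for $\F_t$, so optional stopping gives $\E[M_T]\le1$; since $\rho_j\ge0$ forces $\sum_{j\le T}\rho_j\le B^\star\alpha$ pointwise and $M_T\ge0$, we may pull the exponential out of the expectation to obtain $\E[\loss^{(T)}(a^{(T)};S,S^{-i},\alpha)]\le e^{(\alpha-1)B^\star\alpha}$, i.e.\ $D_\alpha(\A^{(T)}(S)\|\A^{(T)}(S^{-i}))\le B^\star\alpha$. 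Repeating with the roles of $S$ and $S^{-i}$ exchanged gives the reverse divergence, so $D^\leftrightarrow_\alpha(\A^{(T)}(S)\|\A^{(T)}(S^{-i}))\le B^\star\alpha$ for all $S$ and $i$.

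Since this holds for every $\alpha>1$ (the case $\alpha=1$ following by continuity), $\A^{(T)}$ is $B^\star$-zCDP; by the very definition of $B^\star$ it is therefore $(\epsilon_g,\delta_g)$-DP, which is exactly the guarantee required in Definition~\ref{def:dp-filter}, so $\G_{\epsilon_g,\delta_g}$ is a valid DP filter. For the explicit example I would invoke the standard zCDP-to-DP conversion $\rho\text{-zCDP}\Rightarrow(\rho+2\sqrt{\rho\log(1/\delta_g)},\delta_g)\text{-DP}$ \cite{bun2016concentrated}: solving $u^2+2u\sqrt{\log(1/\delta_g)}=\epsilon_g$ for $u=\sqrt{\rho}$ yields $u=-\sqrt{\log(1/\delta_g)}+\sqrt{\log(1/\delta_g)+\epsilon_g}$, and any $\rho\le u^2$ makes $\rho+2\sqrt{\rho\log(1/\delta_g)}\le\epsilon_g$; hence $B^\star\ge u^2=(-\sqrt{\log(1/\delta_g)}+\sqrt{\log(1/\delta_g)+\epsilon_g})^2$, so the displayed threshold is no larger than $B^\star$ and the second filter continues only when the first one does, making it valid as well.

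The main obstacle is bookkeeping around the stopping time rather than any new inequality: one must check that $T$ is genuinely a stopping time adapted to $\sigma(a^{(t)})$ --- this is where ``the sequence of algorithms is chosen in advance'' is used, so that $\epsilon_{t+1}$ is measurable with respect to $a^{(t)}$ --- and that it is bounded, so optional stopping applies; and one must be careful that the passage from the budget bound $\sum_{j\le T}\rho_j\le B^\star\alpha$ to the exponential bound is done by pulling $e^{(\alpha-1)B^\star\alpha}$ out of the expectation, which is legitimate only because $\rho_j\ge0$ makes the budget bound hold pointwise and $M_T\ge0$. Everything else follows because the DP filter is simply a more conservative instance of the order-$\alpha$ R\'enyi filter $\F_{\alpha,B^\star\alpha}$, simultaneously for every $\alpha$.
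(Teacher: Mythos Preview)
Your argument is correct and follows the paper's approach: use the $\epsilon_t$-DP $\Rightarrow$ $(\alpha,\tfrac12\epsilon_t^2\alpha)$-RDP conversion so that the DP-filter condition $\tfrac12\sum_t\epsilon_t^2\le B^\star$ becomes, for every $\alpha$, the R\'enyi filter condition with budget $\alpha B^\star$, whence the composition is $B^\star$-zCDP and hence $(\epsilon_g,\delta_g)$-DP. Your presentation is somewhat more explicit (you rerun the supermartingale/optional-stopping argument and carefully note that the explicit threshold is $\le B^\star$ rather than equal to it), and you use the equivalent Bun--Steinke form $\rho+2\sqrt{\rho\log(1/\delta_g)}$ of the zCDP-to-DP conversion rather than optimizing Fact~\ref{fact:conversion} over $\alpha$, but the substance is identical.
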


\begin{proof}
By conversions between DP and zCDP \cite{bun2016concentrated},  we know that $\epsilon_t$-DP implies $\frac{1}{2}\epsilon_t^2$-zCDP, that is $(\alpha, \frac{1}{2}\epsilon_t^2\alpha)$-RDP, for all $\alpha\geq 1$. Thus, a R\'enyi filter with parameters $(\alpha,\alpha B^\star)$ would stop once $\frac{1}{2}\sum_{t=1}^k\epsilon_t^2 > B^\star$. Since this condition is independent of $\alpha$, the output of adaptive composition with this stopping condition satisfies $(\alpha,B^\star)$-RDP for all $\alpha\geq 1$. This guarantee is equivalent to $B^\star$-zCDP, and by assumption this implies $(\epsilon_g,\delta_g)$-DP as well.

By Fact \ref{fact:conversion}, $B^\star$-zCDP implies $\left(\min_\alpha \alpha B^\star + \frac{\log(1/\delta_g)}{\alpha-1},\delta_g\right)$-DP. Optimizing over $\alpha$ and solving for $B^\star$ such that $\min_\alpha \alpha B^\star + \frac{\log(1/\delta_g)}{\alpha-1} = \epsilon_g$ yields $B^\star = \left(-\sqrt{\log(1/\delta_g)} + \sqrt{\log(1/\delta_g) + \epsilon_g}\right)^2$.
\end{proof}

If the privacy parameters are fixed up front and $\epsilon_t \equiv \epsilon$, simplifying the stopping criterion of the above DP filter implies that adaptive composition of $k$ $\epsilon$-differentially private algorithm satisfies
$$\left(\frac{1}{2}k\epsilon^2 + \sqrt{2k\log(1/\delta)}\epsilon,\delta\right)\text{-differential privacy,}$$
for all $\delta>0$. This tightens the rate of \citet{rogers2016privacy}, whose filter halts when
$$\frac{1}{2}k\epsilon(e^\epsilon-1) + \sqrt{2\left(k\epsilon^2 + C(\epsilon_g,\delta_g)\right)\left(1 + 0.5\log\left(1 + \frac{k\epsilon^2}{C(\epsilon_g,\delta_g)}\right)\right)\log(1/\delta_g)}> \epsilon_g,$$
where $C(\epsilon_g,\delta_g) = \frac{\epsilon_g^2}{28.04\log(1/\delta_g)}$. The factor $C(\epsilon_g,\delta_g)$ essentially determines the gap between our analysis and the analysis of Rogers et al., and our filter is noticeably tighter for non-negligible values of $C(\epsilon_g,\delta_g)$. In addition, our filter has an arguably simpler stopping criterion.

Further improvements on the rate are possible via a more intricate conversion between zCDP and DP, as presented in \cite{bun2016concentrated}.


\subsection{Tracking privacy loss via multiple filters}

A privacy filter is meant to shape the course of adaptive composition by limiting the incurred privacy loss. In practice, one might want to track the privacy loss incurred so far without constraining the analyses. \citet{rogers2016privacy} formalize this desideratum in terms of a \emph{privacy odometer}. For every fixed $t\in\N$, an odometer provides a random bound on the privacy spent so far as a function of the adaptively chosen privacy parameters.
We observe that a sequence of valid R\'enyi privacy filters can be utilized for what is essentially an inverse task: for a fixed sequence of privacy losses $B_1\leq B_2\leq B_3 \dots$, provide random times $T_1 \leq T_2 \leq T_3 \dots$ such that the privacy spent up to time $T_i$ is at most $B_i$. For convenience, we assume that $B_{i+1}-B_i\equiv \Delta$ for a fixed discretization parameter $\Delta>0$ and for all $i$, however in principle the values $\Delta_i = B_{i+1}-B_i$ can vary with $i$. Subsequent work by \citet{lecuyer2021practical} uses a similar discretization idea to construct a privacy odometer in the sense of \citet{rogers2016privacy} based on our R\'enyi privacy filter.


We denote by $\rho_t$ the RDP parameter of order $\alpha$ of $\A_t$ conditional on the past reports, as in equation~\eqref{eqn:generalrho}.

\begin{algorithm}[H]
\SetAlgoLined
\SetKwInOut{Input}{input}
\Input{dataset $S\in\X^n$, discretization error $\Delta>0$, sequence of algorithms $\A_t,t=1,2,\dots,k$}
\hspace{0.1cm} Initialize tracker $O_1 = \Delta$\newline
Set $\Trestart = 1$\newline
\For{$t=1,2,\dots,k$}{
\ Compute $a_t = \A_t(a_1,\dots,a_{t-1},S)$\newline
\uIf{$\F_{\alpha,\Delta}(\rho_{\Trestart},\dots,\rho_{t}) = \halt$}{
\ Augment tracker $O_{t} \leftarrow O_{t-1} + \Delta$\newline
Update restart time $\Trestart \leftarrow t$}
\Else{
\ $O_{t} \leftarrow O_{t-1}$
  }
}
\caption{Tracking privacy loss via multiple privacy filters}
\label{alg:renyi-odometer}
\end{algorithm}

In words, every time an RDP filter with privacy budget $\Delta$ halts, we restart a new filter and augment the tracker by $\Delta$. Here, $\Delta>0$ is the discretization error of the tracker. An important question here is how one should go about choosing $\Delta$. If $\Delta$ is large, then the tracker is very coarse and inaccurate. On the other end, if $\Delta$ is small, the filters might halt often, and whenever a filter halts we effectively make the upper bound on the tracker a bit looser. Roughly speaking, if we restart at time $t$ we lose a factor of $\Delta - \sum_{j = \Trestart}^{t-1} \rho_j$, where $\Trestart$ is the last restart time before $t$.

We state the guarantees of Algorithm \ref{alg:renyi-odometer}. For all $j\in\N$, let $T_j$ denote the step before the $j$-th time a filter restarts in Algorithm \ref{alg:renyi-odometer}. More formally, we can define the sequence $\{T_j\}_j$ recursively as\footnote{We think of the minimum of an empty set as $\infty$.} 
$$T_j = \min\left\{k,\min\left\{t>T_{j-1}:\F_{\alpha,\Delta}(\rho_{T_{j-1}+1},\dots,\rho_{t+1}) = \halt\right\}\right\}, \text{ where } T_0=0.$$

\begin{proposition}
\label{prop:odometer}
Fix $m\in\N$, and suppose that $\rho_j \leq \Delta$ almost surely, for all $j\in \N$. Then, the tracker in Algorithm \ref{alg:renyi-odometer} satisfies $\sup_{(S,S')\in\cS}D^\leftrightarrow_\alpha\left(\A^{(T_m)}(S)\|\A^{(T_m)}(S')\right) \leq m\Delta = O_{T_m}$.
\end{proposition}

\begin{proof}
The algorithm $\A^{(T_m)}$ can be written as an adaptive composition of $m$ algorithms, each of which outputs $(a_{T_{j-1}+1},\dots,a_{T_j})$, $j\in\{1,\dots,m\}$. Therefore, by the standard adaptive composition theorem for RDP, it suffices to argue that each of these $m$ algorithms is RDP, conditional on the outputs of the previous algorithms. Since $\F_{\alpha,\Delta}$ is a valid R\'enyi privacy filter by Theorem \ref{theorem:rdp-at-stopping-times}, each of these $m$ algorithms is indeed $(\alpha,\Delta)$-RDP, which completes the proof.
\end{proof}

Notice that Proposition \ref{prop:odometer} immediately implies that Algorithm \ref{alg:renyi-odometer} is also valid for any $T$ such that $T_{m-1}\leq T\leq T_m$, since $(a_1,\dots,a_T)$ is a post-processing of $(a_1,\dots,a_{T_m})$.


In the context of individual privacy, Proposition \ref{prop:odometer} allows designing a personalized privacy tracker for all analyzed data points. Here, we track $O_t^{(i)}$ for all points $X_i\in S$. The update is analogous to that of Algorithm \ref{alg:renyi-odometer}, the difference being that a separate privacy filter is applied to the individual privacy parameters for all points separately. Naturally, each data point has its own random times of filter exceedances, formally defined as
$$T_j^{(i)} = \min\left\{k,\min\left\{t>T_{j-1}^{(i)}:\F_{\alpha,\Delta}\left(\rho^{(i)}_{T^{(i)}_{j-1}+1},\dots,\rho_{t+1}^{(i)}\right) = \halt\right\}\right\}, \text{ where } T^{(i)}_0=0.$$

Here, $\rho_t^{(i)}$ are individual privacy parameters, measured according to equation \eqref{eqn:stronger-ind-rho}.


It is worth pointing out that the individual values $O_t^{(i)}$ are \emph{sensitive}, as they depend on the value of the data point $X_i$. Importantly, they can be disclosed to the respective user without violating the other users' privacy; $O_t^{(i)}$ depends on $X_i$, but it does not depend on the other data points (other than through $a_1,\dots,a_{t-1}$, which are reported in a privacy-preserving manner).

Below we state an immediate corollary of Proposition \ref{prop:odometer}.

\begin{corollary}
\label{corollary:individual-odometers}
Fix $m\in\N$, and suppose that $\rho_j^{(i)} \leq \Delta$ almost surely, for all $j\in \N$ and $i\in[n]$. Then, the individual tracker $O_t^{(i)}$ upper bounds the individual privacy loss of point $X_i$ at time $T_m^{(i)}$:
$$D^\leftrightarrow_\alpha\left(\A^{(T_m^{(i)})}(S)\|\A^{(T_m^{(i)})}(S^{-i})\right) \leq m\Delta = O_{T_m}^{(i)}.$$
\end{corollary}

Moreover, the same guarantee holds if accounting is done according to the per-instance notion of individual privacy (Definition \ref{def:general-ind-privacy}). In that case, however, the values $O_t^{(i)}$ depend on the whole dataset $S$, and not just $X_i$. Consequently, reporting these values to users, without violating the other users' privacy, would require greater care.

\section{Private gradient descent with individual privacy accounting}
\label{applications}

We discuss an application of the individual privacy filter from Section \ref{filter} to differentially private optimization.

A popular approach to differentially private model training via gradient descent is to clip the norm of individual gradients at every time step and add Gaussian noise to the clipped gradients \cite{abadi2016deep}. Existing analyses compute the overall privacy spent up to a given round by using a uniform upper bound on the gradient norms, determined by the clipping value. Using the individual privacy filter from Section \ref{filter}, we develop a less conservative version of private gradient descent, one which takes into account the \emph{realized} norms of the gradients, rather than just their upper bound.

There are various natural ways one could incorporate individual privacy accounting into the standard private gradient descent (GD) algorithm \cite{abadi2016deep}. To facilitate the comparison, we present a particularly simple one. As in private gradient descent, at every step we clip all computed gradients and add Gaussian noise. However, after the round at which private gradient descent would halt, we additionally look at the ``leftover'' privacy budget for all points, and utilize them until their budget runs out. The leftover budget for each point is essentially equivalent to the difference between the worst-case sum of squared $\ell_2$-norms of the gradients (determined by the clipping value) and the sum of squared $\ell_2$-norms of the \emph{realized} gradients. Below we contrast private gradient descent with individual filtering with the standard private gradient descent algorithm.

\vspace{-0.5cm}
\hspace{-0.55cm}
\begin{minipage}[t]{0.47\textwidth}
\null
\begin{algorithm}[H]
\SetAlgoLined
\SetKwInOut{Input}{input}
\Input{dataset $(X_1,\dots,X_n)$, loss function $\ell(\theta;X_i)$, learning rate $(\eta_t)_{t=1}^\infty$, noise scale $\sigma >0$, clip value $C >0$, number of steps $k\in\N$}
Initialize $\theta_1$ arbitrarily\newline
\For{$t=1,2,\dots,k$}{
\ Compute gradients $g_t(X_i)\leftarrow \nabla_{\theta} \ell(\theta_t;X_i), \forall i$\newline
Clip $\bar g_t(X_i)\leftarrow g_t(X_i)\cdot\min\left(1,\frac{C}{\|g_t(X_i)\|_2}\right), \forall i$\newline
Add noise $\widetilde g_t \leftarrow \frac{1}{n} \sum_{i=1}^n (\bar g_t(X_i) + N(0,\sigma^2C^2\I))$\newline
Take gradient step $\theta_{t+1} \leftarrow \theta_t - \eta_t \widetilde g_t$
}
 Return $\theta_{k+1}$
\caption{Private gradient descent}
\label{alg:private-gd}
\end{algorithm}
\end{minipage}
\hspace{0.25cm}
\begin{minipage}[t]{0.5\textwidth}
\null
\begin{algorithm}[H]
\SetAlgoLined
\SetKwInOut{Input}{input}
\Input{dataset $(X_1,\dots,X_n)$, loss function $\ell(\theta;X_i)$, learning rate $(\eta_t)_{t=1}^\infty$, noise scale $\sigma >0$, clip value $C >0$, number of steps $k_{\max}\in\N$, squared norm budget $\bnorm>0$}
Initialize $\theta_1$ arbitrarily\newline
\For{$t=1,2,\dots,k_{\max}$}{
\ Compute gradients $g_t(X_i)\leftarrow \nabla_{\theta} \ell(\theta_t;X_i), \forall i$\newline
Clip $\bar g_t(X_i)\leftarrow$\newline $~~~~~~g_t(X_i)\cdot\min\left(1,\frac{\min\left(C, \sqrt{\bnorm - \sum_{j=1}^{t-1} \|\bar g_j(X_i)\|_2^2}\right)}{\|g_t(X_i)\|_2}\right), \forall i$\newline
Add noise $\widetilde g_t \leftarrow \frac{1}{n} \sum_{i=1}^n (\bar g_t(X_i) + N(0,\sigma^2C^2\I))$\newline
Take gradient step $\theta_{t+1} \leftarrow \theta_t - \eta_t \widetilde g_t$
}
 Return $\theta_{k+1}$
\caption{Private gradient descent with filtering}
\label{alg:filtered-gd}
\end{algorithm}
\end{minipage}
\vspace{0.3cm}

In Algorithm \ref{alg:private-gd}, all gradients get clipped to have norm at most $C$ at every time step. In Algorithm \ref{alg:filtered-gd}, the gradient for point $X_i$ gets clipped to have norm at most $\min\left(C, \sqrt{\bnorm - \sum_{j=1}^{t-1} \|\bar g_j(X_i)\|_2^2}\right)$. This means that, at least for the first $\lfloor\bnorm/C^2\rfloor$ rounds, all gradients get clipped to have norm at most $C$. After round $\lfloor\bnorm/C^2\rfloor$, points adaptively get filtered out once the accumulated squared norm of their (clipped) gradients reaches $\bnorm$. Therefore, we observe that for $\bnorm = k C^2$ and $k_{\max}=k$, Algorithm \ref{alg:filtered-gd} recovers Algorithm \ref{alg:private-gd}.

The standard privacy guarantees of Algorithm \ref{alg:private-gd} are given as follows.

\begin{proposition}
Private gradient descent (Algorithm \ref{alg:private-gd}) satisfies $\frac{k}{2\sigma^2}$-zCDP, or, equivalently, $\left(\alpha, \frac{\alpha k}{2\sigma^2}\right)$-RDP for all $\alpha\geq 1$.
\end{proposition}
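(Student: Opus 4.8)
The plan is to reduce the claim to the standard RDP composition theorem (the $\alpha$-linear version of the result of \citet{mironov2017renyi} quoted right after Algorithm \ref{alg:generalcomp}) by bounding the per-step RDP cost of Algorithm \ref{alg:private-gd}. First I would fix $\alpha \geq 1$ and a round $t$, and view the $t$-th step as a randomized map from $\theta_t$ (which is a function of the previous reports $\widetilde g_1,\dots,\widetilde g_{t-1}$) and the dataset to the noisy aggregated gradient $\widetilde g_t$. Since $\theta_{t+1}$ is obtained from $\widetilde g_t$ and $\theta_t$ by post-processing, and post-processing does not increase R\'enyi divergence, it suffices to bound $D_\alpha^\leftrightarrow$ between the distributions of $\widetilde g_t$ under neighboring datasets $S$ and $S^{-i}$, conditional on the past.

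The key computation is the $\ell_2$-sensitivity of the clipped gradient sum. Removing point $X_i$ changes $\sum_{j=1}^n \bar g_t(X_j)$ by exactly $\bar g_t(X_i)$, whose norm is at most $C$ by the clipping step; dividing by $n$, the sensitivity of $\widetilde g_t$'s mean is at most $C/n$. The noise added is $N(0,\sigma^2 C^2 \I)/n$ in each coordinate after the $1/n$ scaling — more precisely $\widetilde g_t$ is a Gaussian with covariance $(\sigma^2 C^2/n^2)\I$ centered at the (dataset-dependent) clipped mean. Then I would invoke the standard Gaussian-mechanism RDP bound: for means at $\ell_2$-distance $\Delta$ and noise $N(0,\sigma_0^2 \I)$, the R\'enyi divergence of order $\alpha$ is $\alpha \Delta^2/(2\sigma_0^2)$. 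Here $\Delta \leq C/n$ and $\sigma_0 = \sigma C/n$, so the ratio is $\Delta^2/\sigma_0^2 \leq 1/\sigma^2$, giving per-step cost at most $\alpha/(2\sigma^2)$, and this holds symmetrically for both directions of the divergence and for every $i$ and every fixed value of the past, hence $\A_t$ is $(\alpha, \frac{\alpha}{2\sigma^2})$-RDP. Summing over $t=1,\dots,k$ via standard RDP composition yields $(\alpha, \frac{\alpha k}{2\sigma^2})$-RDP.

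The one place requiring a little care — and the closest thing to an obstacle — is that the clipped mean is a \emph{nonlinear} function of the data, so I must argue the sensitivity bound survives clipping: with $X_i$ present the $i$-th contribution is $\bar g_t(X_i)$ with $\|\bar g_t(X_i)\|_2 \le C$, and with $X_i$ absent that term is simply gone while all other terms $\bar g_t(X_j)$, $j\neq i$, are identical (clipping is applied per-point and $\theta_t$ is fixed given the past). So the two sums differ by the single vector $\bar g_t(X_i)$ of norm $\le C$; there is no interaction effect. The rest is the routine Gaussian RDP computation and an appeal to composition, both of which are already available in the paper's toolkit.
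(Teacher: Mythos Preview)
Your proposal is correct and is exactly the standard argument the paper implicitly invokes: the paper states this proposition as the ``standard privacy guarantees'' of Algorithm~\ref{alg:private-gd} without giving a proof, relying on the well-known per-step Gaussian-mechanism RDP bound followed by the usual RDP composition theorem of \citet{mironov2017renyi}. Your treatment of the only subtle point --- that clipping is applied per-point so removing $X_i$ leaves the other clipped gradients unchanged and hence the $\ell_2$-sensitivity is at most $C$ --- is the right observation, and the rest is routine.
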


We prove the privacy guarantees of Algorithm \ref{alg:filtered-gd} as a corollary of our individual privacy filter.

\begin{proposition}
Private gradient descent with filtering (Algorithm \ref{alg:filtered-gd}) satisfies $\frac{\bnorm}{2\sigma^2C^2}$-zCDP, or, equivalently, $\left(\alpha,\frac{\alpha\bnorm}{2\sigma^2C^2}\right)$-RDP for all $\alpha\geq 1$.
\end{proposition}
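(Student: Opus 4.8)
The plan is to derive this as a direct consequence of Theorem~\ref{theorem:ind-filter-via-rdp-filter} by showing that Algorithm~\ref{alg:filtered-gd} is an instance of adaptive composition with individual privacy filtering (Algorithm~\ref{alg:filteredcomp}), with the budget $B = \frac{\alpha \bnorm}{2\sigma^2 C^2}$. First I would observe that each round of Algorithm~\ref{alg:filtered-gd} is the Gaussian mechanism applied to a sum of clipped gradients. By the linear-query RDP computation in the Example (Linear queries) --- applied with $\phi(X_i) = \bar g_t(X_i)$ and noise scale $\sigma C$ --- the $t$-th step satisfies individual RDP for $X_i$ with parameter
$$\rho_t^{(i)} = \frac{\alpha \|\bar g_t(X_i)\|_2^2}{2 \sigma^2 C^2}.$$
Note the per-coordinate clipping in Algorithm~\ref{alg:filtered-gd} is exactly designed so that $\|\bar g_t(X_i)\|_2 \leq \sqrt{\bnorm - \sum_{j=1}^{t-1}\|\bar g_j(X_i)\|_2^2}$, hence $\sum_{j=1}^t \|\bar g_j(X_i)\|_2^2 \leq \bnorm$ at every round $t$, and therefore $\sum_{j=1}^t \rho_j^{(i)} \leq \frac{\alpha \bnorm}{2\sigma^2 C^2} = B$ holds almost surely.

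Next I would argue that the clipping rule in Algorithm~\ref{alg:filtered-gd} implements exactly the active-set / filtering behavior of Algorithm~\ref{alg:filteredcomp} with the filter $\F_{\alpha,B}$ from Theorem~\ref{theorem:rdp-at-stopping-times}. Specifically: once $\sum_{j=1}^{t-1}\|\bar g_j(X_i)\|_2^2 = \bnorm$, the quantity $\sqrt{\bnorm - \sum_{j=1}^{t-1}\|\bar g_j(X_i)\|_2^2}$ is zero, so $\bar g_t(X_i) = 0$ and $X_i$ contributes nothing to the noisy gradient $\widetilde g_t$ --- which is precisely the statement that $X_i$ has been dropped from the active set $S_t$, contributing $\rho_t^{(i)} = 0$ thereafter. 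Equivalently, defining $\rho_t^{(i)}$ as above, the event $\F_{\alpha,B}(\rho_1^{(i)},\dots,\rho_t^{(i)}) = \cont$ is equivalent to $\sum_{j=1}^t \|\bar g_j(X_i)\|_2^2 \leq \bnorm$, which is exactly the condition maintained by the clipping. A minor subtlety to address: when $\bnorm - \sum_{j=1}^{t-1}\|\bar g_j(X_i)\|_2^2$ is strictly between $0$ and $C^2$, point $X_i$ is still ``active'' but clipped to a smaller norm than $C$; this is still covered because the individual RDP parameter of that step is computed on the \emph{realized} clipped gradient, and the running sum of these parameters still never exceeds $B$.

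Having matched Algorithm~\ref{alg:filtered-gd} to the template of Algorithm~\ref{alg:filteredcomp}, I would conclude by invoking Theorem~\ref{theorem:ind-filter-via-rdp-filter} directly: adaptive composition with individual privacy filtering at budget $B$ is $(\alpha, B)$-RDP, giving $(\alpha, \frac{\alpha \bnorm}{2\sigma^2 C^2})$-RDP for all $\alpha \geq 1$. The main obstacle --- really the only non-cosmetic point --- is verifying that Algorithm~\ref{alg:filtered-gd} genuinely is a special case of Algorithm~\ref{alg:filteredcomp}: one must check that the output $\theta_{k+1}$ (or equivalently the sequence $(\widetilde g_1,\dots,\widetilde g_{k_{\max}})$, of which $\theta_{k+1}$ is a post-processing) is determined by exactly the same information as in the general template, and that the per-coordinate clipping does not ``leak'' in a way not captured by $\rho_t^{(i)}$; since the clipping of $X_i$'s gradient depends only on $X_i$'s own past clipped gradients and the past iterates $\theta_1,\dots,\theta_t$ (which are post-processings of $a_1,\dots,a_{t-1}$), the same reasoning as in the proof of Theorem~\ref{theorem:ind-filter-via-rdp-filter} --- namely that whether $X_i$ is active is measurable with respect to the past reports and independent of the rest of the dataset --- carries over verbatim.
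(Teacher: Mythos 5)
Your proof is correct and follows essentially the same route as the paper's: compute the per-step individual RDP parameter of the Gaussian mechanism, observe that the clipping rule keeps each point's running sum of squared gradient norms below $\bnorm$, and invoke the individual privacy filter (Theorem~\ref{theorem:ind-filter-via-rdp-filter}). If anything, your write-up is more careful than the paper's --- you track the $C^2$ normalization from the noise scale $\sigma C$ consistently with the stated bound, and you explicitly address the partial-clipping case and the reduction to Algorithm~\ref{alg:filteredcomp}, both of which the paper's two-line proof glosses over.
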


\begin{proof}
By properties of the Gaussian mechanism, the individual RDP parameters of order $\alpha$ are $\rho_t^{(i)} = \frac{\alpha \|\bar g_t(X_i)\|_2^2}{2\sigma^2}$. Therefore, by properties of the individual filter, as long as $ \frac{\alpha \sum_{j=1}^t\|\bar g_j(X_i)\|_2^2}{2\sigma^2}\leq  B$, the output is $(\alpha,B)$-individually RDP for $X_i$. The clipping step ensures this inequality holds with $B=\frac{\alpha\bnorm}{2\sigma^2}$ for all $t\in\N$ and for all data points $X_i$, and therefore the algorithm is $\left(\alpha,\frac{\alpha\bnorm}{2\sigma^2}\right)$-RDP.
\end{proof}

When $\bnorm = k C^2$, the privacy guarantees of Algorithm \ref{alg:filtered-gd} are the same as those of Algorithm \ref{alg:private-gd}. However, they \emph{do not} depend on the total number of steps $k_{\max}$---in particular, $k_{\max}$ need not be equal to $k$, which raises the question of how to set $k_{\max}$. (Certainly $k_{\max}$ should be at least $\lfloor\bnorm/C^2\rfloor$, otherwise the privacy budget is not used up for any data point.) If $k_{\max}$ is relatively small, we might stop the optimization process too early, and thus forgo a potentially higher accuracy. If $k_{\max}$ is too large, then a lot of points might get filtered out and we might add high amounts of noise relative to the number of active points.

One solution is to periodically estimate the number of active points in a privacy-preserving fashion. In particular, after round $\lfloor\bnorm/C^2\rfloor$, the analyst can estimate the size of the active set $\{ i\ :\ \sum_{j=1}^t\|\bar  g_j(X_i)\|_2^2\leq \bnorm\}$ (which is just a linear query) and use it to stop. To reduce the privacy cost of such estimates one can use the continual monitoring technique \citep{dwork2010differential} since each point is filtered out only once. Alternatively, if one only wants to ensure that the size of the active set exceeds some fixed threshold, one can use the sparse vector technique \cite{dwork2009complexity, dwork2014algorithmic} and thus incur an even smaller privacy loss due to adaptive stopping. Another solution is to periodically check the training accuracy, which is again a linear query, and stop once it plateaus.

\subsection{Experiments}

As proof of concept, we compare the performance of private gradient descent (Algorithm \ref{alg:private-gd}) and its generalization with filtering (Algorithm \ref{alg:filtered-gd}) by training a convolutional neural network on MNIST \cite{lecun2010mnist}. We use the default architecture from the MNIST example of the PyTorch Opacus library \cite{Opacus}.

We remark that, in practice, it is more common to use private SGD, rather than batch GD. While in principle it is possible to compute individual privacy parameters for SGD, random subsampling of points requires computing gradients for \emph{all} points at every step to observe gains from individual accounting. 
As a result, SGD is no less computationally expensive than batch GD in the context of individual accounting. Nevertheless, GD requires fewer steps and---importantly--- we observe that it achieves a significantly better privacy-utility tradeoff. For example, using the same architecture, the Opacus library reports accuracy $(94.63\% \pm 0.34\%)$ for $\epsilon = 1.16$ and $\delta = 10^{-5}$, which is almost the same accuracy we obtain with $\epsilon = 0.5$ and $\delta = 10^{-5}$ (see table below). Recent large-scale experiments on differentially private training of language models \cite{anil2021large} similarly point to larger batches leading to higher utility, and we believe this phenomenon likely holds in many other settings and is worth further exploration.

All reported average accuracies and deviations are estimated over 10 trials. We fix the target differential privacy parameters $(\epsilon,\delta)$, and evaluate the test accuracy. We set $\delta=10^{-5}$, and vary the value of $\epsilon$. In the first set of evaluations, for every $\epsilon$ we tune all algorithm hyperparameters to achieve high test accuracy with private gradient descent. For private gradient descent with filtering, to make the comparison as clear as possible, we adopt the same hyperparameters. After $\bnorm/C^2$ steps, we query the training accuracy a fixed number of times in intervals of 5 steps and stop adaptively, when the training accuracy is observed to be the highest. We provide the specifics of the stopping rule and other experimental details in the Appendix.

\begin{table}[h]
\centering
\begin{tabular}{ccc}
\hline
$\epsilon$ & private GD (tuned) & private GD (tuned) with filtering \\ \hline
0.3     & $(93.29 \pm 0.49)\%$                     & $(93.64 \pm 0.46)\%$                            \\ \hline
0.5     & $(94.62\pm 0.43)\%$                    & $(94.90\pm 0.26)\%$                              \\ \hline
1.0       & $(96.25\pm0.23)\%$                     & $(96.25\pm0.23)\%$                              
\end{tabular}
\end{table}

\begin{figure}[t]
    \centering
    \includegraphics[width = 0.4\textwidth]{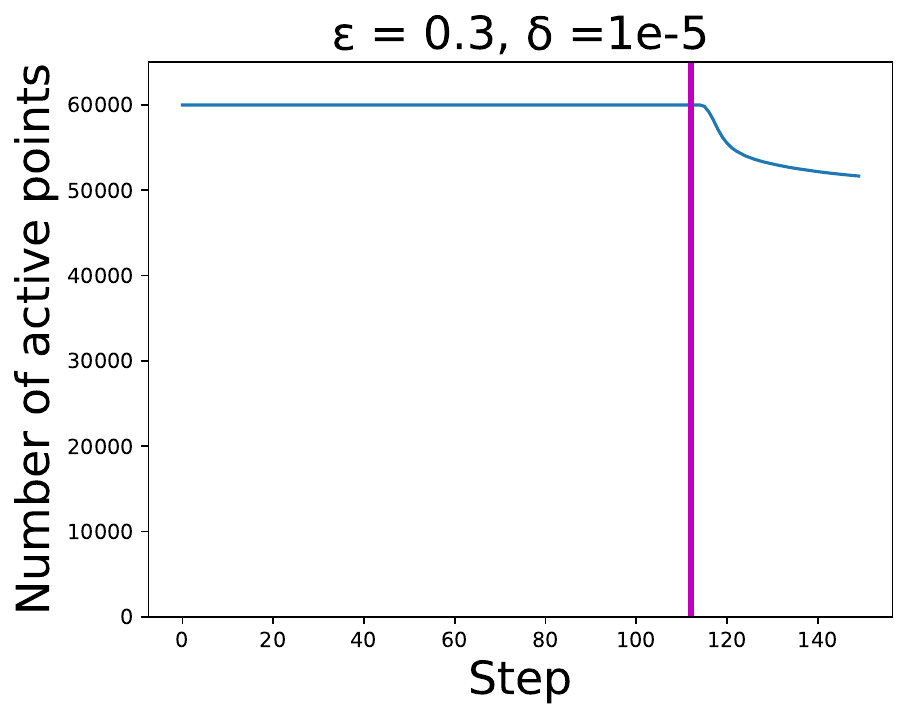}
    \includegraphics[width = 0.4\textwidth]{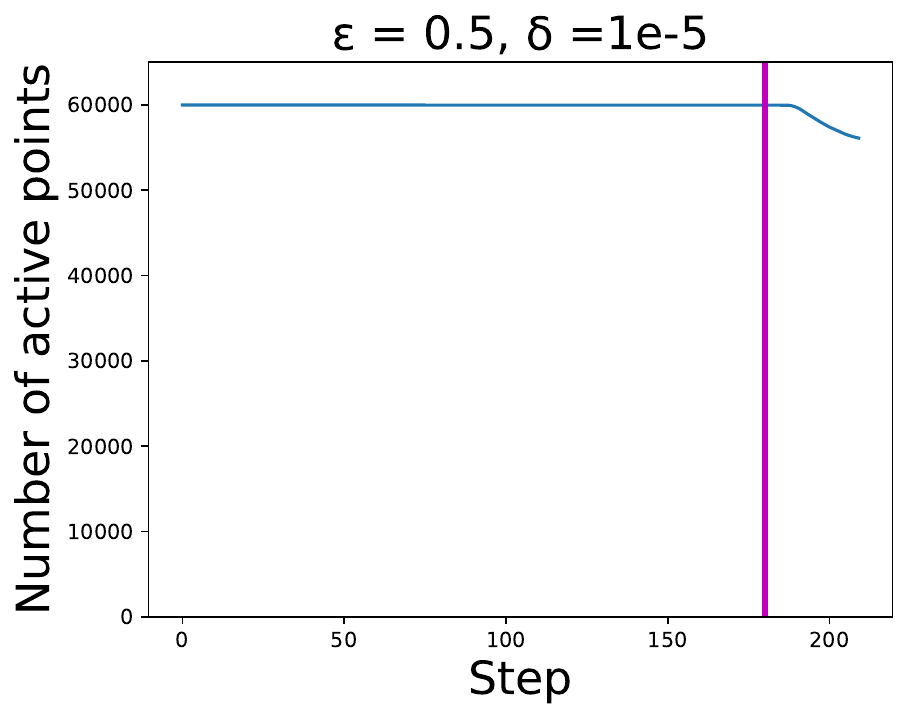}
    \caption{Number of active points during one run of private GD with filtering in the tuned regime, for $\epsilon = 0.3$ (left) and $\epsilon=0.5$ (right). The solid vertical line denotes step $\bnorm/C^2$.}
    \label{fig:active_points}
\end{figure}

Overall we observe modest accuracy improvements with individual filtering in the tuned regime. The benefits are more noticeable for small $\epsilon$, while for large $\epsilon$, the accuracy plateaus after $\bnorm/C^2$ steps and hence we do not add extra steps. In Figure \ref{fig:active_points} we plot the number of active points, i.e. those that have not yet exhausted their privacy budget, for $\epsilon \in\{ 0.3, 0.5\}$. Due to extensive hyperparameter tuning in this specific application, private GD is implicitly tuned to clip the gradients in such a way that hard-to-classify points exhaust their privacy budget. Such tuning, however, is not possible when queries are chosen by human analysts (which is also hard to run experiments on) and in federated settings where data is held by the clients and finding the optimal setting of hyperparameters is typically infeasible. 
Therefore we examine the benefits of individual filtering in examples of such suboptimally tuned settings. For example, we evaluate the performance when the clipping value $C$ is chosen to be larger than in the optimal setting (while keeping the noise level the same). Specifically, for $\epsilon\in\{0.3,0.5\}$, we set $C$ to be $1.5$ the optimally tuned value, and for $\epsilon=1.0$ we set $C$ to be double the tuned value. We reduce the number of optimization steps accordingly to achieve the same privacy guarantee. In the suboptimal regime, the benefits of individual filtering become much more significant as a large fraction of points remain in the active pool after $\bnorm/C^2$ steps.

\begin{table}[h]
\centering
\begin{tabular}{ccc}
\hline
$\epsilon$ & private GD (suboptimal clipping) & private GD (suboptimal clipping) with filtering \\ \hline
0.3  &   $(84.47 \pm 3.95)\%$                      & $(92.25 \pm 0.91)\%$                            \\ \hline
0.5     & $(92.07\pm2.07)\%$                     & $(94.30\pm 0.58)\%$                               \\ \hline
1.0       & $(94.45\pm 0.45)\%$                    & $(95.33\pm 0.34)\%$ 
\end{tabular}
\end{table}

Similar results are obtained when the noise rate is not set optimally. Here, we decrease $\sigma$ by a factor of $1.5$ for $\epsilon\in\{0.3,0.5\}$ and by a factor of $2$ for $\epsilon = 1.0$. We adjust the number of optimization steps accordingly, and keep all other hyperparameters as in the tuned regime. As before, we observe benefits to performing additional steps, especially for small values of $\epsilon$.

\begin{table}[h]
\centering
\begin{tabular}{ccc}
\hline
$\epsilon$ & private GD (suboptimal noise level) & private GD (suboptimal noise level) with filtering \\ \hline
0.3 & $(86.88 \pm 2.28)\%$                    & $(91.20 \pm 0.73)\%$                            \\ \hline
0.5 & $(92.37\pm1.32)\%$                    & $(93.86\pm 0.39)\%$                              \\ \hline
1.0 & $(94.35\pm 0.23)\%$                     & $(94.50\pm 0.14)\%$  
\end{tabular}

\end{table}

Altogether, we view this application as a useful proof of concept: it demonstrates that individual accounting is practical, easy to implement, and can only make the results better.

\subsection*{Acknowledgements}
We thank Katrina Ligett, Kunal Talwar, and Neil Vexler for insightful discussions on individual notions of privacy and feedback on this work. We are grateful to Ryan McKenna for providing code and suggestions for the experiments.

\bibliography{ind_privacy}
\bibliographystyle{plainnat}

\appendix

\section{Experimental details}
\label{exp_details}

We train a convolutional neural network using the implementation of private gradient descent from the Opacus PyTorch library \cite{Opacus}. We use the same architecture as in the MNIST example of the library. Since we run batch gradient descent and not SGD as in the library example, we tune all hyperparameters from scratch.

For $\epsilon = 0.3$, we set $\sigma = 170$, $C=10$, $\eta_t \equiv \eta = 0.2$, and $k=112$ for private GD without filtering. To achieve the same privacy guarantees using private GD with individual filtering, we set $\bnorm = k C^2 = 11200$.

For $\epsilon = 0.5$, we set $\sigma = 130$, $C=15$, $\eta_t \equiv \eta = 0.15$, and $k=180$ for private GD without filtering. For GD with individual filtering, we set $\bnorm = k C^2 = 40500$.

For $\epsilon = 1.0$, we set $\sigma = 100$, $C=10$, $\eta_t \equiv \eta = 0.2$, and $k=420$ for private GD without filtering. This parameter configuration achieves accuracy of $(96.25\pm0.23)\%$. When private GD achieves such high accuracies, we observe little benefit to individual filtering. This is due to the fact that the proportion of points filtered out right after round $\lfloor\bnorm/C^2\rfloor$ is comparable to the proportion of points yet misclassified, suggesting that few misclassified points remain in the active pool. Therefore, we set $\bnorm = k C^2 = 42000$, and $k_{\max}=k$. 

To apply the individual filter, after $kC^2$ steps we continue running gradient descent while adaptively dropping points when their privacy budget is exhausted. In particular, starting with step $\lfloor kC^2\rfloor$, we query the training accuracy 8 times in intervals of 5 steps (hence, the total number of additional steps is 35). As the final model we take the iterate when the queried training accuracy is highest.

Note that, technically, these additional queries should be reported in a privacy-preserving manner to formally ensure DP. However, these are simple linear queries that can be reported with high accuracy at little additional privacy cost. For $\epsilon\in\{0.3,0.5\}$, it suffices to report the training accuracy with $1\%$ resolution. Eight such reports require a smaller privacy cost than, say, one or two additional optimization steps. For $\epsilon = 1.0$ it suffices to report the accuracy with $0.1\%$ resolution since the accuracy improvements after $k C^2$ steps are generally smaller for large $\epsilon$. The additional reports for $\epsilon = 1.0$ have the cost of a few dozen extra steps. In either case, the privacy cost of the additional reports is less than $1\%$ of the intended privacy parameter.

In the suboptimal regime, we increase $C$ by a factor of $1.5$ for $\epsilon\in\{0.3,0.5\}$ and accordingly decrease $\sigma$ by the same factor. The number of steps is similarly decreased by a factor of $1.5^2$. We do a similar adjustment for $\epsilon=1.0$, where we increase $C$ by a factor of $2$. Similarly, when we decrease $\sigma$ by a factor of $1.5$ for $\epsilon\in\{0.3,0.5\}$ and by a factor of $2$ for $\epsilon = 1.0$, we adjust the number of optimization steps accordingly, and keep all other hyperparameters as in the tuned regime.

\end{document}